\newif\iftwo %
\providecommand{\two}[2]{#1}
\providecommand{\two}[2]{#2}
\newif\ifmapx
\newif\ifshowtodo
  \begingroup\color{violet!80!black}
\title{Successive Refinement of Abstract Sources }
\author{
Victoria Kostina, Ertem Tuncel 
\thanks{
V. Kostina (e-mail: \href{mailto:vkostina@caltech.edu}{vkostina@caltech.edu}) is with California Institute of Technology.
E. Tuncel (e-mail: \href{mailto:ertem.tuncel@ucr.edu}{ertem.tuncel@ucr.edu}) is with University of California, Riverside.
This work was supported in part by the National Science Foundation (NSF) under Grant CCF-1566567. It was presented in part at ISIT~2017 \cite{kostina2017succrefISIT}.}
}
\begin{document}
\maketitle
\begin{abstract}
In successive refinement of information, the decoder refines its representation of the source progressively as it receives more encoded bits.  
The rate-distortion region of successive refinement describes the minimum rates required to attain the target distortions at each decoding stage. In this paper, we derive a parametric characterization of the rate-distortion region for successive refinement of abstract sources. Our characterization extends Csisz\'ar's result \cite{csiszar1974extremum} to successive refinement, and generalizes a result by Tuncel and Rose \cite{tuncel2003computation}, applicable for finite alphabet sources, to abstract sources. This characterization spawns a family of outer bounds to the rate-distortion region. It also enables an iterative algorithm for computing the rate-distortion region, which generalizes Blahut's algorithm to successive refinement. Finally, it leads a new nonasymptotic converse bound. In all the scenarios where the dispersion is known, this bound is second-order optimal. 

 In our proof technique, we avoid Karush-Kuhn-Tucker conditions of optimality, and we use basic tools of probability theory. We leverage the Donsker-Varadhan lemma for the minimization of relative entropy on abstract probability spaces.  

\end{abstract}

\begin{IEEEkeywords}
Successive refinement, rate-distortion theory, single-shot analysis, d-tilted information, Blahut algorithm, converse, dispersion. 
\end{IEEEkeywords}

\section{Introduction}
For a source random variable $X \in \mathcal X$ and a distortion measure $\sd \colon \mathcal X \times \mathcal Y \mapsto [0, + \infty)$, where $\mathcal X$ and $\mathcal Y$ are abstract sets (source and reproduction alphabets), the classical informational rate-distortion function is defined as the following minimal mutual information quantity:
\begin{equation}
 R(d)  \triangleq \inf_{ 
\substack{ P_{Y | X} \colon \mathcal X \mapsto \mathcal Y\\  
\E{\sd(X, Y)} \leq d
}} I(X; Y) \label{eq:RRd}
\end{equation}
This convex optimization problem rarely has an explicit solution. The following result provides a parametric representation:

\begin{thm} [Parametric representation of $R(d)$  \cite{csiszar1974extremum}]
 Assume that the following conditions are met. 
 
 \begin{enumerate}[(A)]
  \item  \label{item:a1}
$d_{\min} < d < d_{\max}$, where 
\begin{align}
 d_{\min} &\triangleq \inf \left\{ d\colon ~ R(d) < \infty \right\} \label{sc:dmin}\\
 d_{\max} &\triangleq \inf\left\{ d \colon ~ R(d) \text{ is constant on } (d_{\max}, \infty) \right\}
\end{align}

 \item \label{item:b1} There exists a transition probability kernel $P_{Y^\star | X}$ that attains the infimum in \eqref{eq:RRd}. 
\end{enumerate}

Then, it holds that
\begin{equation}
R(d)  = \max_{\alpha(x), \lambda}\left\{- \E{ \log  {\alpha(X)} } - \lambda d\right\} \label{eq:RR(d)csiszar}
\end{equation}
where the maximization is over $\alpha(x) \geq 0$ and $\lambda\geq 0$ satisfying the constraint
\begin{equation}
\E{ \frac {\exp\left(  - \lambda \mathsf d(X, y)\right) } {\alpha(X)} } \leq 1 ~ \forall y \in \mathcal Y \label{eq:csiszarg}.
\end{equation}
Furthermore, in order for $P_{Y^\star | X}$ to achieve the infimum in \eqref{eq:RRd}, it is necessary and sufficient that
\begin{equation}
\frac{d P_{X | Y^\star = y}}{dP_{X}}(x) =  \frac {\exp (-\lambda^\star \sd(x, y)) } {\alpha(x) }, \label{eq:PYcondstar}
\end{equation}
where\footnote{The differentiability of $R(d)$ is assured by the assumptions that the distortion measure $\sd$ cannot take the value $+\infty$ and that there exists a $P_{Y^\star | X}$ attaining the infimum in \eqref{eq:RRd}  \cite[p. 69]{csiszar1974extremum}. If we allow $\sd$ to take the value $+\infty$, then it is possible that $R(d)$ is not differentiable at some $d$. In that case, \thmref{thm:csiszarg} will hold verbatim replacing $\lambda^\star$ by the negative slope of any tangent to $R(d)$ at $d$. With this easy extension in mind, we choose to limit our attention to finite-valued distortion measures to ensure differentiability. Note also that while $P_{Y^\star | X}$ need not be unique, $\alpha^\star(x)$ is (and therefore, through \eqref{eq:PYcondstar}, so is $P_{X^\star | Y}$); this is a consequence of differentiability of $R(d)$  \cite[p. 69]{csiszar1974extremum}.}
\begin{equation}
\lambda^\star = - R^\prime(d),
\end{equation}
and $0 \leq \alpha(x) \leq 1$ satisfies \eqref{eq:csiszarg}. Finally, the choice
\begin{align}
\alpha^\star(x) &=  \E{\exp ( -\lambda^\star \sd(x,  Y^\star) ) }, 
\end{align}
satisfies both \eqref{eq:csiszarg} and \eqref{eq:PYcondstar}; thus $(\alpha^\star(x), \lambda^\star)$ is the maximizer of \eqref{eq:RR(d)csiszar}.
\label{thm:csiszarg}
\end{thm}
\vspace*{-.3cm}
In \eqref{eq:PYcondstar}, $\frac {dP}{dQ}$ denotes the Radon-Nykodym derivative; if $P$ and $Q$ are both discrete / continuous probability distributions, $\frac {dP}{dQ}$ is simply the ratio of corresponding probability mass / density functions. \thmref{thm:csiszarg} applies to the much more general setting of abstract probability spaces.   It was Csisz\'ar \cite{csiszar1974extremum} who formulated and proved \thmref{thm:csiszarg} in this generality.\footnote{Even more generally, Csisz\'ar \cite{csiszar1974extremum} showed that \eqref{eq:RR(d)csiszar} continues to hold even if the infimum in \eqref{eq:RRd} is not attained by any conditional probability distribution.} For finite alphabet sources,  the parametric representation of $R(d)$  is contained in Shannon's paper \cite{shannon1959coding};  Gallager's \cite[Th. 9.4.1]{gallager1968information} and Berger's \cite{berger1971rate} texts include the parametric representation of $R(d)$ for discrete and continuous sources. Csisz\'ar and K\"orner's book \cite[Th. 8.7]{csiszar2011information} presents a derivation of the parametric representation of the discrete rate-distortion function that employs variational principles. 

The parametric representation of $R(d)$ plays a key role in the Blahut algorithm \cite{blahut1972computation} for computing the rate-distortion function. 
For difference distortion measures, $\sd(x, y) = \sd(x-y)$, a certain choice of $(\alpha(x),\lambda)$ in \eqref{eq:RR(d)csiszar} leads to the Shannon lower bound \cite{shannon1959coding}, a particularly simple, explicit lower bound to the rate-distortion function, which offers nice intuitions and which is known to be tight in the limit $d \downarrow 0$. Leveraging \thmref{thm:csiszarg}, a generalization of Shannon's lower bound to abstract probability spaces was recently proposed \cite{kostina2016slb,kostina2016lowd}. Furthermore, given $(P_X, \sd)$, the $\sd$-tilted information, defined for each realization $x \in \mathcal X$ through the solution to \eqref{eq:RR(d)csiszar} as
\begin{equation}
 \jmath_{\sd}(x, d) \triangleq - \log \alpha^\star(x) - \lambda^\star d, \label{eq:dtilted}
\end{equation}
governs the nonasymptotic fundamental limits of lossy compression \cite{kostina2011fixed}, where the subscript $\sd$ emphasizes the distortion measure used.  

In this paper, we state and prove a generalization of \thmref{thm:csiszarg} to successive refinement of abstract alphabet sources. If the source is successively refinable, that is, if optimal successive coding achieves the respective rate-distortion functions at each decoding stage, our result recovers the representation in \thmref{thm:csiszarg}. 
Our characterization refines a prior finite alphabet result by Tuncel and Rose~\cite[Theorem 4]{tuncel2003computation} and extends it to abstract probability spaces. Our general setting necessitates the use of the mathematical tools fundamentally different from  the standard convex optimization tools (Karush-Kuhn-Tucker conditions) that can be used to solve the finite alphabet case, as carried out in  \cite{tuncel2003computation}. We leverage the Donsker-Varadhan characterization of the minimum relative entropy, and, to show the necessary optimality conditions, we compare a tentative solution to a perturbation by a carefully selected auxiliary distribution.

The new characterization of rate-distortion function for successive refinement on abstract alphabets allows us to identify the key random variable describing the nonasymptotic fundamental limits of successive refinement, and to show a new nonasymptotic converse bound.   In all the scenarios where the dispersion of successive refinement is known \cite{no2016strong,zhou2017successive}, this bound is second-order optimal. 

The new characterization also enables an iterative algorithm, which can be used to compute an accurate approximation to the rate-distortion function of successive refinement, even if the source and reproduction alphabets are not discrete. We prove that when initialized appropriately, the algorithm converges to the true value of rate-distortion function with speed $\bigo{\frac 1 k}$, where $k$ is the iteration number. The algorithm  can be viewed as a generalization of Blahut's algorithm \cite{blahut1972computation} and its extension to successive refinement  by Tuncel and Rose \cite{tuncel2003computation} for discrete alphabets. Methods to compute the capacity and rate-distortion functions for continuous alphabets were proposed in \cite{chang1988calculatinginfinite} and \cite{rose1994mapping}.

The rest of the paper is organized as follows. The main result of the paper characterizing the abstract rate-distortion function (\thmref{thm:parametric}) is presented in \secref{sec:main}. The main nonasymptotic converse result, \thmref{thm:c}, is shown in \secref{sec:nonasymptotic}. A proof of \thmref{thm:csiszarg}, which streamlines Csisz\'ar's argument \cite{csiszar1974extremum}, is presented in \secref{sec:csiszar}. The proof of \thmref{thm:parametric}, which leverages the ideas presented in \secref{sec:csiszar} and in \cite{tuncel2003computation}, is presented in \secref{sec:proof}. \secref{sec:algo} discusses the iterative algorithm for computation of rate-distortion function of successive refinement.

Throughout the paper, $\mathbb R_+ = [0, +\infty)$ is the positive real line; $P_X$-a.e. $x$ stands for `almost every $x$', i.e. 'except on a set with total $P_X$ measure 0'; $P_X \to P_{Y | X} \to P_{Y}$ signifies that $P_Y$ is the distribution observed at the output of random transformation $P_{Y|X}$ when the input is distributed according to $P_X$, i.e. $P_Y$ is the marginal of $P_X P_{Y|X}$. When we say that a random variable $X$ takes values in a set $\mathcal X$, we understand that $\mathcal X$ comes together with its $\sigma$-algebra $\mathscr X$, forming a measurable space $(\mathcal X, \mathscr X)$. Throughout the paper, we assume that 
all $\sigma$-algebras contain singletons (this is true for any countably separated $\sigma$-algebra). For two measurable spaces $(\mathcal X, \mathscr X)$ and $(\mathcal Y, \mathscr Y)$, a \emph{transition probability kernel} from $(\mathcal X, \mathscr X)$ into $(\mathcal Y, \mathscr Y)$ is a mapping $\kappa \colon \mathcal X \times \mathscr Y \mapsto [0, 1]$ such that {\it{(i)}} the mapping $x \mapsto \kappa (x, B)$ is $\mathscr X$-measurable for every $B \in \mathscr Y$, and  {\it{(ii)}} the mapping  $B \mapsto \kappa (x, B)$ is a probability measure on $(\mathcal Y, \mathscr Y)$ for every $x \in \mathcal X$.

\section{Characterization of rate-distortion function}
\label{sec:main}
Consider the source random variable $X \in \mathcal X$ and two (possibly different) distortion measures  $\sd_1 \colon \mathcal X \times \mathcal Y_1 \mapsto [0, + \infty)$ and $\sd_2 \colon \mathcal X \times \mathcal Y_2 \mapsto [0, + \infty)$, quantifying the accuracy of lossy compression at the first and the second stages, respectively.
An $(M_1, M_2, d_1, d_2)$ average distortion code for $(P_X, \sd_1, \sd_2)$ is a pair of encoders
\begin{align}
 &\sf_1 \colon \mathcal X \mapsto \{1, \ldots, M_1\} \label{eq:enc1}\\
  &\sf_2 \colon \mathcal X \mapsto \left\{1, \ldots, \left \lfloor {M_2}/{M_1} \right \rfloor \right\} \label{eq:enc2}
\end{align}
and decoders
\begin{align}
  &\sg_1 \colon  \{1, \ldots, M_1\} \mapsto \mathcal Y_1 \label{eq:dec1}\\
  &\sg_2 \colon \{1, \ldots, M_1\}  \times \left\{1, \ldots, \left \lfloor {M_2}/{M_1} \right \rfloor \right\}  \mapsto\mathcal Y_2 \label{eq:dec2}
\end{align}
such that
\begin{align}
\E{\sd_1(X, \sg_1 (\sf_1(X))) } &\leq d_1, \label{eq:d1av}\\
 \E{\sd_2(X, \sg_2 (\sf_1(X), \sf_2(X) )) } &\leq d_2. \label{eq:d2av}
\end{align}
For the successive refinement of $n$ i.i.d. copies of $X$ with separable distortion measures $\sd_1^{(n)}(x^n, y^n) = \frac 1 n \sum_{i = 1}^n \sd_1(x_i, y_i)$,  $\sd_2^{(n)}(x^n, y^n) = \frac 1 n \sum_{i = 1}^n \sd_2(x_i, y_i)$, we say that the distortions $(d_1, d_2)$ are asymptotically attainable with rates  $(R_1, R_2)$ at first and second stages if there exists a sequence of $(M_1, M_2, d_1, d_2)$ average distortion codes for $(P_{X^n}, \sd_1^{(n)}, \sd_2^{(n)})$ with
\begin{align}
\limsup_{n \to \infty} \frac 1 n \log M_1 &\leq R_1,\\
 \limsup_{n \to \infty} \frac 1 n \log M_2 &\leq R_2.
\end{align}
Rimoldi \cite{rimoldi1994successive} showed that for the discrete memoryless source, the distortions $(d_1, d_2)$ are asymptotically attainable with rates  $(R_1, R_2)$ at first and second stages if and only if
\begin{equation}
\begin{aligned}
 I(X; Y_1) &\leq R_1 								&\E{\sd_1(X, Y_1) }& \leq d_1\\
I(X; Y_1, Y_2) &\leq R_2 							&\E{\sd_2(X, Y_2) }& \leq d_2, 
\label{eq:reg}
\end{aligned}
\end{equation}
where here and in the sequel, $R_2$ refers to the {\em total rate} at both stages (see Effros \cite{effros1999distortion} for a generalization to continuous alphabets and stationary sources). 
It is convenient to consider the following equivalent representation of the boundary of the set in \eqref{eq:reg}:
\begin{align}
\label{eq:RRdsr}
 R_2(d_1, d_2, R_1) \triangleq \inf \left\{R_2 \colon (d_1, d_2, R_1, R_2) \text{ satisfy } \eqref{eq:reg} \right\}.
\end{align}
Henceforth, we refer to the function $R_2(d_1, d_2, R_1) \colon \mathbb R_+^3 \mapsto \mathbb R_+$ as the \emph{ second stage rate-distortion function}. 
It represents the minimum asymptotically achievable total rate compatible with rate $R_1$ at the first stage and at-stage distortions $d_1, d_2$.  
For any achievable $(R_1, R_2, d_1, d_2)$, the following bound in terms of the standard rate-distortion function in \eqref{eq:RRd} clearly holds:
\begin{align}
R_1 &\geq R_{\sd_1}(d_1)  \\
R_2 &\geq R_{\sd_2}(d_2) 
\end{align}
where $R_{\sd_1}(\cdot)$ and $R_{\sd_2}(\cdot)$ denote the rate-distortion functions for distortion measures $\sd_1$ and $\sd_2$, respectively. In \figref{fig:reg}, $(d_1, d_2)$ are fixed, and the region of achievable $(R_1, R_2)$ is greyed out; $R_2(d_1, d_2, R_1)$ is its boundary drawn in red.
If the point $(R_{\sd_1}(d_1), R_{\sd_2}(d_2))$ is attainable, the source is said to be \emph{successively refinable} \cite{equitz1991successive} at $(d_1, d_2)$.

\begin{figure}[htp]
\centering
    \epsfig{file=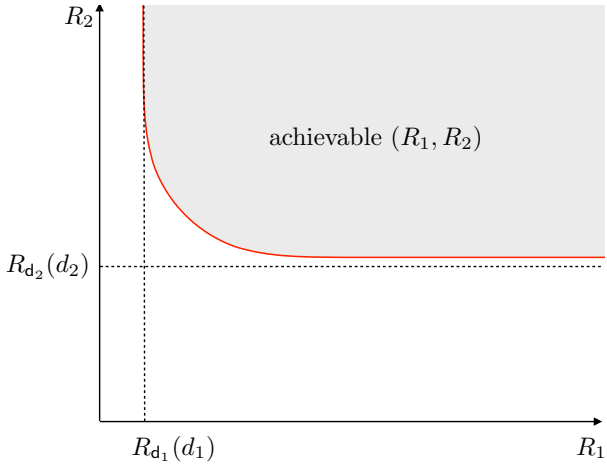,width=\two{.9}{.45}\linewidth}
 \caption[]{The rate-distortion region for successive refinement, for fixed $d_1, d_2$. Note that %
 if $\sd_1 = \sd_2$, and $d_2 < d_1$, then $R_2(d_1, d_2, R_1) = R(d_2)$ is attained at some $R_1 < \infty$. 
 } \label{fig:reg}
\end{figure}

Throughout the paper, we assume that the following conditions are met.  
 \begin{enumerate}[(a)]
  \item  \label{item:a}
 $ R_2(d_1, d_2, R_1)$ is finite in some nonempty region $ \subseteq \mathbb R_+^3$. 
 \item \label{item:b} There exist transition probability kernels $P_{Y^\star_1 | X}$ and $P_{Y_2^\star | XY_1^\star}$ that attain the infimum in \eqref{eq:RRdsr}. 
\end{enumerate}

The mild assumption \eqref{item:b} is always satisfied, for example, if $\mathcal Y_1, \mathcal Y_2$ are finite;  if $\mathcal X$ is Polish,  $\mathcal Y_1, \mathcal Y_2$ are compact metric, and distortion measures $\sd_1$, $\sd_2$ are jointly continuous; and if $\mathcal X = \mathcal Y_1 = \mathcal Y_2$ are Euclidean spaces with $\sd_1(x, y) \to \infty$, $\sd_2(x, y) \to \infty$ as $\|x - y\| \to \infty$ \cite{csiszar1974extremum}.

The second stage rate-distortion function $R_2(d_1, d_2, R_1)$ is nondecreasing and jointly convex in $(d_1, d_2, R_1)$ (see \lemref{lem:convex} in \secref{sec:proof} below). The  region of $(d_1, d_2, R_1)$ where the constraints are satisfied with equality is defined as follows. 
\begin{align}
\label{eq:omega}
\Omega \triangleq \{&
(d_1, d_2, R_1) \in \mathbb R_+^3 \colon  \forall (\epsilon_1, \epsilon_2, \epsilon_3 ) > 0 \colon 
\two{\\ \notag}{}
 &R_2(d_1 + \epsilon_1, d_2+\epsilon_2, R_1+\epsilon_3) < R_2(d_1, d_2, R_1) <  \infty \}. 
\end{align}
In the important special case of $\sd_1=\sd_2$, 
\begin{align}
\!\!\!\! \Omega = \left\{ (d_1, d_2, R_1) \colon R(d_1) < R_1 < R(d_2), ~ d_1 \leq d_{\max} \right\}, 
\end{align}
where $d_{\max}$ is the smallest positive scalar such that $R(d)$ is constant on $(d_{\max}, \infty)$.

Since $ R_2(d_1, d_2, R_1)$ is convex in its input, each point $(d_1, d_2, R_1) \in \mathbb R_+^3$ on the curve can be parametrized via the supporting hyperplane $h - \lambda_1^\star d_1 - \lambda_2^\star d_2 - \nu_1^\star R_1 = 0$. Here $h$ is the is the distance of the hyperplane from the origin, and the triple $(\lambda_1^\star, \lambda_2^\star, \nu_1^\star)$ defines the normal vector to the hyperplane. Thus, to each $(d_1, d_2, R_1) \in \mathbb R_+^3$ there corresponds a triplet $(\lambda_1^\star, \lambda_2^\star, \nu_1^\star) \in \mathbb R_+^3$ such that for some $h \in \mathbb R_+$, the hyperplane $h - \lambda_1^\star d_1 - \lambda_2^\star d_2 - \nu_1^\star R_1 = 0$ is tangent to $R_2(d_1, d_2, R_1)$ at $(d_1, d_2, R_1)$.

Before we state our main result, we present the following notation. For measurable functions $\beta_1 \colon \mathcal X \mapsto \mathbb R_+$, $\beta_2 \colon \mathcal X \times \mathcal Y_1 \mapsto \mathbb R_+$ and nonnegative numbers $\lambda_1, \lambda_2, \nu_1$, denote
\begin{align}
\Sigma_2(y_1, y_2) &\triangleq \E{ \frac {\exp \left(-\frac{\lambda_1}{1+\nu_1} \sd_1(X, y_1) - \lambda_2 \sd_2(X, y_2)  \right)} {\beta_1(X) \beta_2(X| y_1)^{\frac{\nu_1}{1+\nu_1}} } } \\
\Sigma_1(y_1) &\triangleq \E{ \frac{\exp \left(-\frac{\lambda_1}{1+\nu_1} \sd_1(X, y_1)  \right)}{\beta_1(X)\beta_2(X| y_1)^{-\frac{1}{1+\nu_1}} }  }. 
\end{align}
The quantities $\Sigma_1(y_1)$ and $\Sigma_{2}(y_1,y_2)$ generalize the expectation on the left side of \eqref{eq:csiszarg} to successive refinement.

The main result of the paper can now be stated as follows.

\begin{thm} [Parametric representation]
 \label{thm:parametric}
 Assume that $(d_1, d_2, R_1) \in \Omega$.  
The boundary of the rate-distortion region of successive refinement can be represented as
\begin{align}
 &~ R_2(d_1, d_2, R_1) 
 \two{\notag \\}{}
 =
 \two{&~}{}
  \max
  \left\{\E{ \log  {\frac{1}{\beta_1(X)^{1+\nu_1}}} } - \lambda_1 d_1 - \lambda_2 d_2 - \nu_1 R_1\right\},  \two{\notag \\}{}
 \label{eq:parametric}
\end{align}
where the maximization is over $(\beta_1(x), \nu_1, \lambda_1, \lambda_2) \geq 0$ satisfying, for some $\beta_2(x|y_1) \geq 0$, the constraints
\begin{align}
 \Sigma_2(y_1, y_2)  &\leq  1  \label{eq:sigma12}, \\
\Sigma_1(y_1)  &\leq 1 \label{eq:sigma1}
\end{align}
for all $ (y_1, y_2) \in \mathcal Y_1 \times \mathcal Y_2$.

Furthermore, 
in order for $(P_{Y^*_1|X}, P_{Y^*_2 | X Y_1^*})$ to achieve the infimum in \eqref{eq:RRdsr}, it is necessary and sufficient that 
\begin{align}
\frac{d P_{ X | Y_1^\star = y_1}}{dP_{X}}(x) &=  \frac{\exp \left(-\frac{\lambda_1^\star}{1+\nu_1^\star} \sd_1(x, y_1) \right)}{\beta_1(x)\beta_2(x| y_1)^{-\frac{1}{1+\nu_1^\star}}}    \label{eq:Y1star},\\
\frac{d P_{ X| Y_1^\star = y_1, Y_2^\star = y_2 }}{dP_{X | Y_1^\star  = y_1}}(x) &=  \frac {\exp (-\lambda_2^\star \sd_2(x, y_2))} {\beta_2(x| y_1)}    \label{eq:Y2star},
\end{align}
where
\begin{align}
(\lambda_1^\star, \lambda_2^\star, \nu_1^\star) &= - \nabla R_2 (d_1, d_2, R_1) \label{eq:slopes}, 
\end{align}
and $0 \leq \beta_1(x) \leq 1$, $0 \leq \beta_2(x|y_1) \leq 1$ satisfy
\begin{align}
 \Sigma_2(y_1, y_2)  &\leq   \Sigma_1(y_1) \label{eq:sigma12a} \\
 &\leq 1 \label{eq:sigma1a}.
\end{align}
for all $ (y_1, y_2) \in \mathcal Y_1 \times \mathcal Y_2$.
Finally, the choice 
\begin{align}
 \beta_1^\star(x) &= \E{\beta_2(x| Y_1^\star)^\frac{1}{1+\nu_1^\star}\exp \left(-\frac{\lambda_1^\star}{1+\nu_1^\star} \sd_1(x, Y_1^\star)    \right)}\label{eq:g1star}, \\
 \beta_2^\star(x| y_1) &= 
 \E{\exp(- \lambda_2^\star  \sd_2(x, Y_2^\star)) | Y_1^\star = y_1}.
  \label{eq:g2star} 
\end{align}
satisfies \eqref{eq:Y1star}, \eqref{eq:Y2star}, \eqref{eq:sigma12a}, \eqref{eq:sigma1a} and thus
achieves the maximum in \eqref{eq:parametric}.  Equality in \eqref{eq:sigma1a} is attained for $P_{Y_1^\star}$-a.e. $y_1$, and equality in \eqref{eq:sigma12a} is attained for $P_{Y_2^\star | Y_1^\star = y_1}$-a.e. $y_2$.\footnote{By the definition of a transition probability kernel, the transition probability kernels $P_{Y_2^\star | X, Y_1 = y_1}$ and $P_{Y_2^\star | Y_1^\star = y_1}$ are well defined at every $y_1$ (and not only at $P_{Y_1^\star}$-a.e. $y_1$). }  
\end{thm}

If the source is successively refinable at $(d_1,d_2)$, then the optimal choice is
\begin{align}
\beta_1(x) &= \alpha_1^\star(x)^{\frac{\nu_1}{1+\nu_1}} \alpha_2^\star(x)^{\frac{1}{1+\nu_1}} \label{eq:g1lb},\\
\beta_2(x|y_1) &= \exp\left( - \frac{\lambda_1}{\nu_1}  \sd_1(x, y_1)\right) \alpha_1^{ \star-1}(x) \alpha_2^\star(x) \label{eq:g2lb},\\
\lambda_1 &= - \nu_1 R_{\sd_1}^\prime(d_1),\\
\lambda_2 &= - R_{\sd_2}^\prime(d_2). \label{eq:l2sr}
\end{align}
for an arbitrary $\nu_1>0$, 
where $\alpha_1^\star(\cdot)$, $\alpha_2^\star(\cdot)$ achieve the maximum of \eqref{eq:RR(d)csiszar} for $\{\sd_1, d_1\}$ and $\{\sd_2, d_2\}$, respectively. It is easy to verify that in this case, \eqref{eq:sigma12} and \eqref{eq:sigma1} are satisfied, and the function in \eqref{eq:parametric} equals $R_{\sd_2}(d_2)$ when $R_1=R_{\sd_1}(d_1)$. Plugging \eqref{eq:g1lb}, \eqref{eq:g2lb} into \eqref{eq:Y1star}, \eqref{eq:Y2star} yields the optimal kernels 
 \begin{align}
d P_{ X | Y_1^\star = y_1} (x) &=  \frac{\exp \left(- \frac {\lambda_1}{\nu_1} \sd_1(x, y_1) \right)}{ \alpha_1^\star(x)} dP_{X}(x),\label{eq:sr1} \\
d P_{ X| Y_1^\star = y_1, Y_2^\star = y_2 }  &=  
\frac {\exp (-\lambda_2 \sd_2(x, y_2))} {\alpha_2^\star(x)} \frac{ \alpha_1^\star(x)\, dP_{X | Y_1^\star  = y_1}  } {\exp \left(- \frac {\lambda_1}{\nu_1} \sd_1(x, y_1) \right)} \label{eq:sr2}\\
&= \frac {\exp (-\lambda_2 \sd_2(x, y_2))} {\alpha_2^\star(x)} dP_{X} (x), \label{eq:sr2a}
\end{align}
which coincide with the kernels that achieve the single-stage rate-distortion function \eqref{eq:PYcondstar}, indicating successive refinability. The intuition is as follows. After the first stage of successive refinement is complete, the effective source distribution to be compressed is $P_{ X | Y_1^\star}$. Due to \eqref{eq:sr2a}, the Markov chain condition $P_{ X| Y_1^\star, Y_2^\star} = P_{ X| Y_2^\star}$ holds, where $P_{ X| Y_2^\star}$ is the backward transition probability kernel that achieves the rate-distortion function at $d_2$ for $P_X$. Thus after the second stage the effective source distribution coincides with that of the optimal single-stage rate-distortion code, $P_{ X | Y_2^\star}$. The calculation \eqref{eq:sr2a} also recovers the Markovian characterization of successive refinability due to Equitz and Cover \cite[Th. 2]{equitz1991successive}.

 \thmref{thm:parametric} refines a prior finite alphabet result by Tuncel and Rose~\cite[Th. 4]{tuncel2003computation} and extends it to abstract probability spaces. In the finite alphabet case, the optimality conditions \eqref{eq:g1star}, \eqref{eq:g2star} and \eqref{eq:sigma12a}, \eqref{eq:sigma1a}  were stated in \cite[eq. (47), eq. (46) and eq. (50)]{tuncel2003computation}, respectively. 
The dual representation of the rate-distortion region as a maximum over functions in \eqref{eq:parametric} is new. One reason why such a representation is useful is that by choosing $\beta_1$ and $\beta_2$ appropriately, one can generate outer bounds to the rate-distortion region. For example, choosing $\beta_1$ and $\beta_2$ as in \eqref{eq:g1lb} and \eqref{eq:g2lb} leads to an outer bound to the rate-distortion region in \eqref{eq:reg}, even if the source is not successively refinable. This particular choice also  leads to a nonasymptotic converse bound in \corref{cor:c} in \secref{sec:nonasymptotic} below.

\section{Nonasymptotic converse bound}
\label{sec:nonasymptotic}

We focus on excess distortion codes for successive refinement, that we formally define as follows. 
An $(M_1, M_2, d_1, d_2, \epsilon_1, \epsilon_2)$ code for $(P_X, \sd_1, \sd_2)$ is a pair of encoders $(\sf_1, \sf_2)$ \eqref{eq:enc1}, \eqref{eq:enc2} and decoders $(\sg_1, \sg_2)$ \eqref{eq:dec1}, \eqref{eq:dec2} such that
\begin{align}
 \Prob{\mathcal A_1^c} &\leq \epsilon_1, \label{eq:A1prob}\\
 \Prob{\mathcal  A_2^c} &\leq \epsilon_2,
\end{align}
where $\mathcal A_1$ and $\mathcal A_2$ denote the successful decoding events at first and second stages, respectively:
\begin{align}
 \mathcal A_1 &\triangleq \left\{  \sd_1(X, Y_1) \leq d_1 \right\}, \label{eq:A1}\\
\mathcal  A_2 &\triangleq \mathcal A_1 \cap \left\{  \sd_2(X, Y_2) \leq d_2 \right\} \label{eq:A2},
\end{align}
where $Y_1 = \sg_1(\sf_1(X))$ and $Y_2 = \sg_2 (\sf_1(X), \sf_2(X) )) $. We allow randomized encoders and decoders, in which case $\sf_1,\sf_2,\sg_1,\sg_2$ are transition probability kernels rather than deterministic mappings.

It was shown in \cite{kostina2011fixed} that for single stage compression, the random variable called tilted information, defined in \eqref{eq:dtilted}, plays the key role in the corresponding nonasymptotic fundamental limits. Leveraging the result of \thmref{thm:parametric}, we can define the tilted information for successive refinement as follows. 
\begin{defn}
Fix $P_X$, $\sd_1$, $\sd_2$. Tilted information for successive refinement of $x$ at $(d_1, d_2, R_1) \in \Omega$  is defined as
\begin{equation}
\jmath(x, d_1, d_2, R_1) \triangleq  (1 + \nu_1^\star)\log \frac 1 {\beta_1^\star(x)} - \lambda_1^\star d_1 - \lambda_2^\star d_2 - \nu_1^\star R_1
\end{equation}
where $(\beta_1^\star(\cdot)$, $\lambda_1^\star, \lambda_2^\star, \nu_1^\star)$ achieve the maximum in \eqref{eq:parametric}. 
\label{defn:tilted}
\end{defn}

If the source is successively refinable at $d_1, d_2$, then the tilted information for successive refinement coincides with the tilted information for single stage compression: 
\begin{equation}
\jmath(x, d_1, d_2, R_{\sd_1}(d_1)) = \jmath_{\sd_2}(x, d_2).
\end{equation}

Fixing $(\beta_1(\cdot)$, $\lambda_1, \lambda_2, \nu_1) \geq 0$ that satisfy \eqref{eq:sigma12} and \eqref{eq:sigma1}, for some $\beta_2(\cdot | \cdot) \geq 0$, the notion of tilted information can be generalized by defining
\begin{align}
 F &\triangleq (1 + \nu_1)\log \frac 1 {\beta_1(x)} - \lambda_1 d_1 - \lambda_2 d_2 - \nu_1 \log M_1 \label{eq:F}.
\end{align}
Choosing $(\beta_1(\cdot)$, $\lambda_1, \lambda_2, \nu_1)= (\beta_1^\star(\cdot)$, $\lambda_1^\star,\, \lambda_2^\star,\, \nu_1^\star)$ as in Definition \ref{defn:tilted} would result in $F = \jmath(X, d_1, d_2, \log M_1)$.
 For a given $(M_1, M_2, d_1, d_2, \epsilon_1, \epsilon_2)$ code $(\sf_1, \sf_2, \sg_1, \sg_2)$ with $Y_1 = \sg_1(\sf_1(X))$, it is instructive to split \eqref{eq:F} into two terms (corresponding to both stages of successive refinement):
\begin{align}
F = \nu_1 (F_1  - \log M_1) + F_2,
\end{align}
where
\begin{align}
F_1 &\triangleq \log \frac{\beta_2(X|Y_1)^{\frac{1}{1+\nu_1}}} {\beta_1(X)}  - \frac{\lambda_1}{1+\nu_1} d_1,\\
F_2 &\triangleq \log \frac{\beta_2(X |Y_1)^{-\frac{\nu_1}{1+\nu_1}}}{\beta_1(X)}  - \frac{\lambda_1}{1+\nu_1} d_1 - \lambda_2 d_2. 
\end{align}
Roughly speaking, $F_1$ and $F_2$ represent the estimates of the number of bits about $X$ that need be conveyed at the end of first and second stages in order to satisfy the constraints $\sd_1(X, Y_1) \leq d_1$ and $\sd_1(X, Y_2) \leq d_2$, respectively, i.e. the information content of $X$ relevant to satisfying these constraints. Since we are looking at a fixed rate scenario, and  $F_1$ and $F_2$ are random variables, we expect the excess distortion event to occur once the information contents $F_1$ and $F_2$  exceed those chosen fixed rates. This intuition is made rigorous in the next result, which states that the probability that $F_1, F_2$ are too high for the chosen rates yet the decoding is performed correctly is low.  

\begin{thm}
For an $(M_1, M_2, d_1, d_2, \epsilon_1, \epsilon_2)$ code to exist, it is necessary that for all $(\gamma_1, \gamma_2) > 0$,
\begin{align}
 \Prob{ \{F_1 \geq \log M_1 + \gamma_1\} \cap \mathcal A_1} &\leq \exp(-\gamma_1) \label{eq:c1},\\
  \Prob{ \{F_2 \geq \log M_2 + \gamma_2\} \cap \mathcal A_2} &\leq \exp(-\gamma_2) \label{eq:c2},
\end{align}
where $\mathcal A_1, \mathcal A_2$ are the successful decoding events \eqref{eq:A1}, \eqref{eq:A2}. 
\label{thm:c}
\end{thm}

\begin{proof}[Proof of \thmref{thm:c}]
We employ \thmref{thm:parametric} similar to how \thmref{thm:csiszarg} was employed in the proof of \cite[Th. 7]{kostina2011fixed}.

Let the two-stage encoder and decoder be the random transformations $(P_{W_1|X}, P_{W_2 | X, W_1})$ and $(P_{Y_1|W_1}, P_{Y_2 | W_1, W_2})$, where $W_1$ takes values in $\{1, \ldots, M_1\}$, and $W_2$ takes values in $\left\{1, \ldots, \left \lfloor {M_2}/{M_1} \right \rfloor \right\}$.  

Furthermore, introduce the auxiliary distribution $Q_{W_1 W_2}$, equiprobable on $\{1, \ldots, M_1\} \times \{1, \ldots, \left \lfloor {M_2}/{M_1} \right \rfloor\}$, and let $Q_{Y_1 Y_2}$ be the distribution on $\mathcal Y_1 \times \mathcal Y_2$ that arises after $Q_{W_1 W_2}$ is passed through the random transformation  $P_{Y_1 Y_2|W_1, W_2}$ defined by our code, i.e. $Q_{W_1 W_2} \to  P_{Y_1 Y_2|W_1, W_2} \to Q_{Y_1 Y_2}$. 

To show \eqref{eq:c1}, write, for any $\gamma_1 \geq 0$
\begin{align}
&~ \Prob{ \{F_1 \geq \log M_1 + \gamma_1\} \cap \mathcal A_1}  \notag\\
\leq&~ \int_{x \in \mathcal X} dP_X(x) \sum_{w = 1}^{M_1} P_{W_1|X = x}(w) \int dP_{Y_1|W_1 = w}(y)  
\two{\notag\\ \phantom{=}&~\cdot}{}
\1{ M_1 \leq  \exp\left(  F_1  -\gamma_1 \right) } \1{\sd_1(x, y_1) \leq d_1 } \\
\leq&~  \exp\left(-\gamma_1\right)
  \mathbb E_{P_X \times Q_{Y_1}} \left[ \exp\left( F_1 \right)  \1{\sd_1(X, Y_1)\leq d_1} \right] \label{eq_-Ca}\\
\leq&~   \exp\left(-\gamma_1\right)  \mathbb E_{Q_{Y_1}} [\Sigma_1(Y_1)]
 \label{eq_-Ca1}\\
\leq&~ \exp\left(-\gamma_1\right) \label{eq_-Cb}
\end{align}
where
\begin{itemize}
 \item \eqref{eq_-Ca} follows by upper-bounding $P_{W_1|X = x}(w) \leq 1$, and
\begin{align}
1\left\{ M_1 \leq \exp\left(F_1 -\gamma_1 \right) \right\} 
  \leq&~\frac {\exp\left( -\gamma_1\right)} {M_1} \exp\left(F_1 \right);
\end{align}
 \item \eqref{eq_-Cb} is due to \eqref{eq:sigma1}.
\end{itemize}

We proceed to show \eqref{eq:c2}.
We have, for any $\gamma_2 \geq 0$
\begin{align}
&~  \Prob{ \{F_2 \geq \log M_2 + \gamma_2\} \cap \mathcal A_2}\notag \\
\leq&~ \int_{x \in \mathcal X} dP_X(x) \sum_{w_1, w_2} P_{W_1, W_2|X = x}(w_1, w_2) \notag\\
\phantom{=} &~ \cdot \mathbb E%
\big[ \1{M_2 \leq  \exp\left(  F_2 -\gamma_2 \right) } 
\two{\notag\\ &~ \phantom{\mathbb E \big[ ~} \cdot}{}
\1{\mathcal A_2} | X = x, W_1 = w_1, W_2 = w_2\big]\\
\leq&~  \exp\left(-\gamma_2\right)
 \mathbb E_{P_X \times Q_{Y_1 Y_2}} \left[ \exp\left( F_2 \right)  \1{\mathcal A_2} \right] \label{eq:Ca2}\\
\leq&~  \exp\left(-\gamma_2\right) 
 \mathbb E_{Q_{Y_1 Y_2}} [\Sigma_2(Y_1, Y_2)]
 \label{eq_-Ca1}\\
\leq&~ \exp\left(-\gamma_2\right) \label{eq:Cb2},
\end{align}
where
\begin{itemize}
 \item \eqref{eq:Ca2} follows by upper-bounding 
 $P_{W_1, W_2|X = x}(w_1, w_2) \leq 1$, and 
\begin{align}
 1\left\{ M_2 \leq \exp\left(F_2 -\gamma_2 \right) \right\} \leq \frac {\exp\left( -\gamma_2\right)} {M_2} \exp\left(F_2 \right);
\end{align}
 \item \eqref{eq:Cb2} is due to \eqref{eq:sigma12}.
\end{itemize}

\end{proof}
\thmref{thm:c} immediately leads to the following converse:  for an $(M_1, M_2, d_1, d_2, \epsilon_1, \epsilon_2)$ code to exist, it is necessary that for all $(\gamma_1, \gamma_2) > 0$, 
\begin{align}
\epsilon_1 &\geq  \Prob{ F_1 \geq \log M_1 + \gamma_1 } - \exp(-\gamma_1) \label{eq:c1}, \\
\epsilon_2 &\geq  \Prob{ F_2 \geq \log M_2 + \gamma_2 } - \exp(-\gamma_2) \label{eq:c2}.
\end{align}

In general, $F_1$ and $F_2$ are functions of a given code, which limits the computability of the basic converse in \thmref{thm:c} or that in \eqref{eq:c1}, \eqref{eq:c2}. Fortunately, via elementary probability rules, \thmref{thm:c} immediately leads to a series of corollaries that are computable and useful in several applications as explained below.

The following corollary to \thmref{thm:c} is immediate from the observation that \eqref{eq:g1lb}--\eqref{eq:l2sr} satisfy \eqref{eq:sigma12} and \eqref{eq:sigma1}, and thus $F_1 = \jmath_{\sd_1}(X, d_1)$ and $F_2 = \jmath_{\sd_2}(X, d_2)$ is a valid choice for these functions. 
\begin{cor}
Fix an  $(M_1, M_2, d_1, d_2, \epsilon_1, \epsilon_2)$ code. Then, for all $(\gamma_1, \gamma_2) > 0$, it holds that
\begin{align}
\epsilon_1 &\geq \mathbb P \bigg[  \jmath_{\sd_1}(X, d_1) \geq \log M_1 + \gamma_1\bigg] - \exp(-\gamma_1)\\
\epsilon_2 &\geq \mathbb P \bigg[ \jmath_{\sd_2}(X, d_2) \geq \log M_2 + \gamma_2 \bigg] - \exp(-\gamma_2).
\end{align}
where $\jmath_{\sd_1}$ and $\jmath_{\sd_2}$ are the $\sd_1$- and $\sd_2$-tilted informations (defined in \eqref{eq:dtilted}), respectively. 
\label{cor:c}
\end{cor}

\corref{cor:c} applies whether or not the source  is successively refinable. 

The next corollary recombines the $F_1$ and $F_2$ events in \thmref{thm:c} to yield a bound on the joint error probability $\epsilon_2$ in terms of 
$F$ and $F_1$. This is useful when $F_1$ is a function of $X$ only; for example when $F_1 = \jmath_{\sd_1}(X, d_1)$. 
\begin{cor}
Fix an  $(M_1, M_2, d_1, d_2, \epsilon_1, \epsilon_2)$ code. For all $(\gamma_1, \gamma_2) > 0$, both \eqref{eq:c1} and
\begin{align}
 \epsilon_2 \geq&~ \mathbb P \bigg[ \{F 
  \geq \log M_2 + \nu_1 \gamma_1 + \gamma_2 \} \cup \{F_1 \geq \log M_1 + \gamma_1 \} \bigg] 
  \two{\notag\\&~}{}
  - \exp(-\gamma_1) - \exp(-\gamma_2) \label{eq:cgen}
\end{align}
must hold. 
\label{cor:cg}
\end{cor}

\begin{proof}
 Consider the event
\begin{equation}
\mathcal B \triangleq \left\{  F \geq \log M_2 + \nu_1 \gamma_1 + \gamma_2 \right\} \cup \left\{ F_1 \geq \log M_1 + \gamma_1 \right\}. 
\end{equation}
 Using elementary probability laws and \thmref{thm:c}, write
 \begin{align}
\Prob{\mathcal B} 
=&~ \Prob{\mathcal B  \cap \mathcal A_2^c} + \Prob{\mathcal B  \cap \mathcal A_2 }\\
\leq &~ \epsilon 
+ \Prob{\mathcal B  \cap \mathcal A_2 \cap \{ F_1 \geq \log M_1 + \gamma_1 \} } 
\two{\notag \\ &~ \phantom{ \epsilon } }{}
+  \Prob{\mathcal B  \cap \mathcal A_2 \cap \{ F_1 < \log M_1 + \gamma_1 \}} \\
\leq &~ \epsilon + \exp\left(-\gamma_1\right) + \Prob{ \mathcal A_2 \cap \{ F_2 \geq \log M_2 + \gamma_2 \} }\\
\leq &~ \epsilon + \exp(-\gamma_1) + \exp(-\gamma_2).
\end{align}
\end{proof}

In general, $F_1$ is a function of a given code, which limits the computability of the converse in \corref{cor:cg}. However, when operating at first stage rate close to $R_{\sd_1}(d_1)$, which corresponds to the vertical asymptote in \figref{fig:reg}, $F_1$ becomes a function of~$X$ only, and \eqref{eq:cgen} gives a computable bound that is tighter than \eqref{eq:c}. Indeed, letting $(\beta_1(\cdot)$, $\lambda_1, \lambda_2, \nu_1)$ to achieve the maximum in \eqref{eq:parametric} at $(d_1, d_2, R_{\sd_1}(d_1))$, we obtain $F_1 = \jmath_{\sd_1}(X, d_1)$, which is a function of $X$ only, and $F = \jmath(X, d_1, d_2, R_{\sd_1}(d_1)) + \nu_1 R_{\sd_1}(d_1) - \nu_1 \log M_1$.

Omitting the $F_1$ event from the probability in \eqref{eq:cgen} and choosing $(\beta_1(\cdot)$, $\lambda_1, \lambda_2, \nu_1)$ as in Definition \ref{defn:tilted} so that $F = \jmath(X, d_1, d_2, \log M_1)$, we obtain a bound on the joint error probability $\epsilon_2$ in terms of tilted information only, stated in \corref{cor:c2} below. This is nice because it generalizes the corresponding result for one stage compression \cite[Th. 7]{kostina2011fixed}, and because it leads to a tight second-order result, as explained at the end of this section. 

\begin{cor}
For any  $(M_1, M_2, d_1, d_2, \epsilon_1, \epsilon_2)$ code and for all $(\gamma_1, \gamma_2) > 0$, it holds that
\begin{align}
 \epsilon_2 \geq \mathbb P \bigg[ &\jmath(X, d_1, d_2, \log M_1) 
  \geq \log M_2 + \nu_1^\star \gamma_1 + \gamma_2 %
   \bigg] 
   \two{\notag\\ &}{}
  - \exp(-\gamma_1) - \exp(-\gamma_2) \label{eq:c}
\end{align}
\label{cor:c2}
\end{cor}

In a typical application of the bound in \corref{cor:c2}, $\gamma_1$ and $\gamma_2$ will be chosen so that the terms $\nu_1^\star \gamma_1 + \gamma_2$ inside the probability and $\exp(-\gamma_1) + \exp(-\gamma_2)$ outside are both negligible. Thus, \corref{thm:c} establishes that the excess-distortion probability is roughly bounded below by the complementary cdf of tilted information.

For successively refinable finite alphabet sources, No et al. \cite{no2016strong}  found the dispersion of successive refinement. 
The dispersion of non-successively refinable finite alphabet sources was recently computed in \cite{zhou2017successive}. A straightforward second-order analysis (along the lines of \cite[(103)--(106)] {kostina2011fixed}) of the bound in \corref{cor:c} recovers the converse parts of the dispersion results in \cite{no2016strong} and \cite{zhou2017successive}, respectively, and extends them to abstract stationary memoryless sources.  
Specifically, let $\Qinv{\epsilon}$ be the inverse of the standard Gaussian complementary cdf and let $\mathbf Q^{-1}(\epsilon, \boldsymbol \Sigma)$ be the $K$-dimensional analogue of that function for a Gaussian random vector with zero mean and covariance matrix $\boldsymbol \Sigma$, i.e. $\mathbf Q^{-1}(\epsilon, \boldsymbol \Sigma)$ is the boundary of the set  
\begin{align}
 \left\{ z \in \mathbb R^K \colon \Prob{\mathcal N (\mathbf 0, \boldsymbol \Sigma) \leq z} \geq 1 - \epsilon \right\}.
\end{align}
 
 Consider some $(R_1^\star, R_2^\star)$ on the boundary of the set in \eqref{eq:reg} and some $(L_1^\star, L_2^\star)$ on the boundary of the set
\begin{align}
\Big\{ &(L_1, L_2) \in \mathbb R^2 \colon \two{\notag \\ }{} 
\two{&}{}
\nu_1^\star L_1 + L_2 \leq  \sqrt{\Var{ \jmath(X, d_1, d_2, R_1^\star)} }\, \Qinv{\epsilon_2} \Big\},
\end{align}
where $\nu_1^\star$ is the negative of the derivative of $R_2(d_1, d_2, R_1)$ with respect to $R_1$ at $R_1 = R_1^\star$. An asymptotic analysis of \corref{cor:c2} yields an extension of the converse part of \cite[Th. 11 (i)]{zhou2017successive} to abstract alphabets: 
if an $(M_1, M_2, d_1, d_2, \epsilon_1, \epsilon_2)$ code exists for $n$ i.i.d. copies of $X$, then 
\begin{align}
\log M_1 &\geq n  R_1^\star   + \sqrt n L_1^\star + \bigo{\log n}, \label{eq:M12order}\\
\log M_2 &\geq n R_2^\star + \sqrt n  L_2^\star + \bigo{\log n} \label{eq:M22order}.
\end{align}
When the asymptotic rate at first stage is the vertical asymptote in \figref{fig:reg}, i.e. $R_1^\star = R_{\sd_1}(d_1)$,  then \corref{cor:cg} leads to the following strengthening of \eqref{eq:M12order}, \eqref{eq:M22order}: if an $(M_1, M_2, d_1, d_2, \epsilon_1, \epsilon_2)$ code exists for $n$ i.i.d. copies of $X$, then \eqref{eq:M12order}, \eqref{eq:M22order} hold with $(L_1^\star, \nu^\star L_1^\star + L_2^\star) \in \mathbf Q^{-1}(\epsilon_2, \boldsymbol \Sigma)$ for $\boldsymbol \Sigma$ being the covariance matrix of the two-dimensional random vector $(\jmath_{\sd_1}(X, d_1), \jmath(X, d_1, d_2, R_1^\star))$. The finite alphabet case of this result is the converse part of \cite[Th. 11 (iii)]{zhou2017successive}. The converse result \eqref{eq:M12order}, \eqref{eq:M22order} also holds with $R_1^\star = R_{\sd_1}(d_1)$, $R_2^\star = R_{\sd_2}(d_2)$, and  $\boldsymbol \Sigma$ the covariance matrix of the two-dimensional random vector $(\jmath_{\sd_1}(X, d_1), \jmath_{\sd_2}(X, d_1))$, which is tight if the source is successively refinable \cite[Cor. 13 (iii)]{zhou2017successive}. 

Unlike \cite{zhou2017successive} who focused on the joint probability of error $\epsilon_2$ without placing any further constraint on $\epsilon_1$ apart from the trivial $\epsilon_1 \leq \epsilon_2$, No et al. \cite{no2016strong} considered a formulation that places
separate upper bounds on each of the probabilities that the source is not reproduced within distortion levels $d_1$ and $d_2$, i.e. \eqref{eq:A1prob} and $\Prob{\sd_2(X, Y_2) > d_2} \leq \epsilon_2^\prime$. It is easy to show that \corref{cor:c} continues to hold with $\epsilon_2$ replaced by $\epsilon_2^\prime$. The converse part of \cite[Cor. 6]{no2016strong} then extend it to abstract alphabets as follows:  if an $(M_1, M_2, d_1, d_2, \epsilon_1, \epsilon_2^\prime)$ code under separate error probability formalism exists for $n$ i.i.d. copies of $X$, then 
\begin{align}
\log M_1 &\geq n  R_{\sd_1}(d_1)   + \sqrt{ n \Var{\jmath_{\sd_1}(X, d_1)}} \Qinv{\epsilon_1} \two{\notag \\ &} {} + \bigo{\log n} \label{eq:M12ordera},\\
\log M_2 &\geq n  R_{\sd_2}(d_2)   + \sqrt{ n \Var{\jmath_{\sd_2}(X, d_2)}} \Qinv{\epsilon_2^\prime} \two{\notag \\ &} {} + \bigo{\log n} \label{eq:M22ordera}.
\end{align}

 \section{Proof of \thmref{thm:csiszarg}}
 \label{sec:csiszar}
 In this section, we revisit the beautiful proof of \thmref{thm:csiszarg} by Csisz\'ar \cite{csiszar1974extremum}. We streamline Csisz\'ar's argument by using the the Donsker-Varadhan characterization of the minimum relative entropy, stated below, which will be also instrumental in the proof of \thmref{thm:parametric}.

 \begin{lemma}[{Donsker-Varadhan, \cite[Lemma 2.1]{donsker1975asymptotic}, \cite[Th. 3.5]{polyanskiy2012notes}} ]
Let $\rho \colon \mathcal X \mapsto [-\infty, + \infty]$ and let $\bar X$ be a random variable on $\mathcal X$ such that 
$
\E{\exp\left( - \rho(\bar X)\right) } < \infty
$. 
Then, 
\begin{equation}
  D(X \| \bar X) + \E{ \rho(X) } \geq \log \frac 1 {\E{\exp\left( - \rho(\bar X)\right) }} \label{sc:lemmaverdu}
\end{equation}
with equality if and only if $X$ has distribution $P_{X^\star}$ such that
\begin{equation}
dP_{X^\star}(x) = \frac{\exp\left( -\rho(x)\right)}{\E{\exp\left( - \rho(\bar X)\right) }} dP_{\bar X}(x)  \label{eq:dvXstar}
\end{equation}
\label{lem:dv}
\end{lemma}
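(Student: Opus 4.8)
The plan is to verify the lemma by exhibiting the candidate optimizer $P_{X^\star}$ explicitly and then reducing the inequality to the non‑negativity of relative entropy, so that no convex‑analytic duality is needed. Write $Z \triangleq \E{\exp(-\rho(\bar X))} \in [0,\infty)$, finite by hypothesis.

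First I would dispatch the degenerate cases. If $P_X$ is not absolutely continuous with respect to $P_{\bar X}$, then $D(X\|\bar X) = +\infty$ and \eqref{sc:lemmaverdu} is trivial; so assume $P_X \ll P_{\bar X}$. If $Z = 0$, then $\exp(-\rho(\bar X)) = 0$ $P_{\bar X}$‑a.s., hence $\rho = +\infty$ both $P_{\bar X}$‑ and, by domination, $P_X$‑a.s., so $\E{\rho(X)} = +\infty$ while $D(X\|\bar X) \geq 0$; both sides of \eqref{sc:lemmaverdu} are then $+\infty$. Likewise, if $P_X(\{\rho = +\infty\}) > 0$ the left side is $+\infty$. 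So assume $0 < Z < \infty$ and $\rho < +\infty$ $P_X$‑a.s. Then \eqref{eq:dvXstar} defines a genuine probability measure $P_{X^\star}$ (nonnegative density integrating to $Z/Z = 1$) with $P_{X^\star} \ll P_{\bar X}$, and since its density $\exp(-\rho)/Z$ is strictly positive on $\{\rho < +\infty\}$ we also get $P_X \ll P_{X^\star}$.

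Next comes the one computational step: the chain rule for Radon--Nikodym derivatives gives, $P_X$‑a.s.,
\begin{equation}
\log\frac{dP_X}{dP_{\bar X}}(X) + \rho(X) \;=\; \log\frac{dP_X}{dP_{X^\star}}(X) + \log\frac{1}{Z}. \notag
\end{equation}
Taking $P_X$‑expectations is legitimate because the right‑hand side is quasi‑integrable (the negative part of a log‑likelihood ratio is always integrable, as $\log^- t \le 1/t$ and $\E_{P_X}[dP_{X^\star}/dP_X] \le 1$), so both $D(X\|X^\star)$ and, reading the left side as $\E{(\log\tfrac{dP_X}{dP_{\bar X}} + \rho)(X)}$, the quantity $D(X\|\bar X) + \E{\rho(X)}$ are well defined, and the identity becomes $D(X\|\bar X) + \E{\rho(X)} = D(X\|X^\star) + \log(1/Z) \ge \log(1/Z)$, since $D(X\|X^\star) \ge 0$. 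This is \eqref{sc:lemmaverdu}. For the equality condition, the same identity shows equality holds iff $D(X\|X^\star) = 0$, i.e.\ (since relative entropy vanishes exactly when the two laws coincide) iff $P_X = P_{X^\star}$, which is \eqref{eq:dvXstar}; conversely that choice forces $D(X\|X^\star)=0$ and hence equality in \eqref{sc:lemmaverdu}.

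I expect the only real friction to be the measure‑theoretic bookkeeping around the values $\pm\infty$ of $\rho$ and the possible failure of domination, so that the chain‑rule identity is invoked precisely on the event where every term makes sense, the degenerate cases having been peeled off first. Once those are isolated, the argument is just the chain rule together with the non‑negativity of divergence and its vanishing‑iff‑equal characterization.
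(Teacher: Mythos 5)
The paper states this lemma with citations (Donsker--Varadhan, Polyanskiy's notes) and does not give a proof of it, so there is no in-paper proof to compare against. Your argument is the standard one---rewriting $D(X\|\bar X) + \E{\rho(X)}$ as $D(X\|X^\star) + \log\frac{1}{\E{\exp(-\rho(\bar X))}}$ via the Radon--Nikodym chain rule and invoking non-negativity of relative entropy together with its vanishing-iff-equal characterization---and it is correct, including the peeling off of the degenerate cases and the quasi-integrability justification for passing to expectations.
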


We now recall some useful general properties of $R(d)$. 

Fix source distribution $P_X$. For some transition probability kernel $P = P_{Y|X}$, put 
\begin{align}
I(P) &\triangleq I(X; Y)\\
\rho(P) &\triangleq \E{\sd(X, Y)}. 
\end{align}

 \begin{lemma}[{\cite[Lemma 1.1]{csiszar1974extremum}}]
$R(d)$ is non-increasing, convex and 
\begin{equation}
R(d) = \inf_{\substack{P \colon \rho(P) = d}} I(P) \quad d_{\min} < d \leq d_{\max}. \label{eq:eqd}
\end{equation}
\end{lemma}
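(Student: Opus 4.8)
The plan is to establish the three assertions of the lemma in sequence, relying only on elementary convexity properties of mutual information and of the distortion functional, together with the definitions of $d_{\min}$ and $d_{\max}$.

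\textbf{Monotonicity} is immediate: if $d\le d'$ then $\{P\colon \rho(P)\le d\}\subseteq\{P\colon \rho(P)\le d'\}$, so $R(d')\le R(d)$. \textbf{Convexity} rests on two structural facts about the map $P\mapsto\bigl(I(P),\rho(P)\bigr)$ on the convex set of kernels $P_{Y|X}$ with $P_X$ held fixed: $\rho(P)=\E{\sd(X,Y)}$ is \emph{affine} in $P$, while $I(P)=\E{D\!\left(P_{Y|X}\,\|\,P_Y\right)}$ is \emph{convex} in $P$, because $I(P)=D\bigl(P_X P_{Y|X}\,\|\,P_X\otimes P_Y\bigr)$, both arguments are affine in $P$, and relative entropy is jointly convex in its pair of arguments. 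Given $d_0,d_1$, $\theta\in(0,1)$, and $\epsilon$-optimal kernels $P_0,P_1$ for $R(d_0),R(d_1)$, the kernel $P_\theta\triangleq(1-\theta)P_0+\theta P_1$ then satisfies $\rho(P_\theta)\le(1-\theta)d_0+\theta d_1$ and $I(P_\theta)\le(1-\theta)R(d_0)+\theta R(d_1)+\epsilon$; sending $\epsilon\downarrow0$ gives joint convexity of $R$ along this segment.

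\textbf{The representation \eqref{eq:eqd}.} The bound $R(d)\le\inf_{\rho(P)=d}I(P)$ is trivial since $\{\rho(P)=d\}\subseteq\{\rho(P)\le d\}$. For the reverse inequality, fix $\epsilon>0$ and a kernel $P$ with $\rho(P)\le d$ and $I(P)\le R(d)+\epsilon$; if $\rho(P)=d$ we are done, so suppose $\rho(P)=d'<d$. The idea is to perturb $P$ toward a zero-rate, high-distortion kernel and stop exactly at distortion $d$. Take a \emph{constant} kernel $P_1(\,\cdot\mid x)\equiv Q$ whose output is independent of $X$; then $I(P_1)=0$, and — this is the only place the hypothesis $d\le d_{\max}$ enters — $\rho(P_1)\ge d_{\max}\ge d$, because $d_{\max}$ is precisely the least distortion attainable with the reproduction independent of the source (for $d<d_{\max}$ one even has $\rho(P_1)>d$ strictly). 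Setting $P_\theta\triangleq\theta P+(1-\theta)P_1$, affinity gives $\rho(P_\theta)=\theta d'+(1-\theta)\rho(P_1)=d$ for $\theta=\frac{\rho(P_1)-d}{\rho(P_1)-d'}\in(0,1)$, while convexity gives $I(P_\theta)\le\theta I(P)+(1-\theta)\cdot0\le I(P)\le R(d)+\epsilon$. Hence $\inf_{\rho(P)=d}I(P)\le R(d)+\epsilon$, and $\epsilon\downarrow0$ settles the case $d<d_{\max}$. The endpoint $d=d_{\max}$ is then recovered by running the same construction along a sequence $d_n\uparrow d_{\max}$: use the already-proved equality at $d_n$ to pick $P_n$ with $\rho(P_n)=d_n$ and $I(P_n)\le R(d_n)+1/n$, mix with $P_1$ to restore $\rho=d_{\max}$, and invoke left-continuity of the finite convex function $R$ at $d_{\max}$, which yields $R(d_n)\to R(d_{\max})$ and mixing weights $\theta_n\to1$.

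\textbf{Main obstacle.} The whole argument reduces to the two structural facts (affinity of $\rho$, convexity of $I$) plus the explicit auxiliary kernel $P_1$. I expect the only genuinely delicate points to be: (i) justifying that every constant kernel has distortion at least $d_{\max}$ — equivalently, that $R$ is strictly decreasing on $(d_{\min},d_{\max})$, which follows from convexity, monotonicity, and the definition of $d_{\max}$ as an infimum; and (ii) the limiting bookkeeping at $d=d_{\max}$. Both are routine once phrased carefully, and no Karush–Kuhn–Tucker machinery is needed.
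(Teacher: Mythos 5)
The paper does not prove this lemma---it is cited directly as Lemma~1.1 of Csisz\'ar (1974)---so there is no in-paper proof to compare against. Your argument is the natural reconstruction and the first two assertions go through cleanly: monotonicity is as you say, and your convexity argument (affinity of $\rho$ together with $I = D(P_X P_{Y|X}\,\|\,P_X\otimes P_Y)$ having both arguments affine in $P_{Y|X}$, plus joint convexity of relative entropy) is exactly the one the paper itself later reuses to prove convexity of $R_2(d_1,d_2,R_1)$.

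The mixing step for the equality-constrained representation has a genuine gap in the abstract-alphabet setting that is the whole point of this paper: the interpolation weight $\theta=\frac{\rho(P_1)-d}{\rho(P_1)-d'}$ lies in $(0,1)$ only if $\rho(P_1)<\infty$, and for a constant kernel this can fail. If $\sd$ grows fast enough---say $X\sim\mathcal N(0,1)$, $\sd(x,y)=e^{(x-y)^2}$---then $\E{\sd(X,y)}=+\infty$ for every $y$, so every constant (indeed every source-independent) kernel has infinite expected distortion, your $\theta$ degenerates to $1$, and your identification of $d_{\max}$ with ``the least distortion attainable with reproduction independent of the source'' also fails (that infimum is $+\infty$ here, but $d_{\max}$ may be finite or may be $+\infty$). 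The repair stays entirely within your toolbox: for $d<d_{\max}$, strict decrease of $R$ on $(d_{\min},d_{\max})$---which you correctly extract from convexity, monotonicity, and the definition of $d_{\max}$---gives some $\tilde d\in(d,d_{\max})$ with $R(\tilde d)<R(d)$, hence a kernel $\tilde P$ with $\rho(\tilde P)\le\tilde d<\infty$ and $I(\tilde P)<R(d)$; necessarily $\rho(\tilde P)>d$ (otherwise $R(d)\le I(\tilde P)<R(d)$), and mixing $P$ with this $\tilde P$ in place of $P_1$ closes the argument. Your limiting argument at $d=d_{\max}$ inherits the same difficulty and, in addition, silently assumes that $\rho(P)=d_{\max}$ is attainable by some finite-distortion kernel; that needs to be argued rather than taken for granted.
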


 Let $F(\lambda)$ denote the maximum of the vertical axis intercepts of the straight lines of slope $-\lambda$ which have no point above the $R(d)$ curve, i.e. using \eqref{eq:eqd} for $\lambda > 0$ (see \figref{fig:reg1}) \footnote{The optimization problem in \eqref{eq:Fl} is known as the Lagrangian dual problem, and the function $F(\lambda)$ as the Lagrange dual.}
\begin{equation}
F(\lambda) \triangleq \inf_P I(P) + \lambda \rho(P). \label{eq:Fl}
\end{equation}
\begin{figure}[htp]
\centering
 \epsfig{file=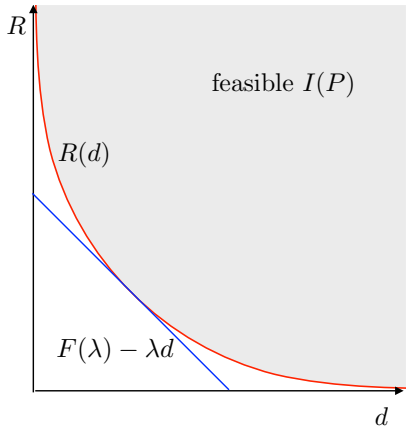,width=\two{.6}{.3}\linewidth}
 \caption[]{Lagrange duality for the rate-distortion problem.} \label{fig:reg1}
\end{figure}
Furthermore, since $R(d)$ is convex and nonincreasing, to each $d \geq d_{\min}$, there exists $\lambda \geq 0$ such that the straight line of slope $-\lambda$ through $(d,  R(d))$ is tangent to the $R(d)$ curve, and
\begin{equation}
R(d) \triangleq \max_{\lambda \geq 0} \left( F(\lambda) - \lambda d \right). \label{eq:lagr1}
\end{equation}

 \thmref{thm:csiszarg} will follow from \eqref{eq:lagr1} and \thmref{thm:nessuf} below. 

\begin{thm}[Necessary and sufficient conditions for an optimizer  \cite{csiszar1974extremum}]
In order for $P_{Y^*|X}$ to achieve the infimum in \eqref{eq:Fl}, it is necessary and sufficient that 
 \begin{equation}
\frac{d P_{X | Y^\star = y}}{dP_{X}}(x)  =  \frac {\exp (-\lambda \sd(x, y)) } {\alpha(x)}  \label{eq:yg}
\end{equation}
where $0 \leq \alpha(x) \leq 1$ satisfies \eqref{eq:csiszarg}.
Furthermore, the choice
\begin{equation}
\alpha^*(x) =  \E{\exp ( -\lambda \sd(x,  Y^*) ) } \label{eq:fstar}
\end{equation}
satisfies \eqref{eq:yg} and \eqref{eq:csiszarg}, and for any $\alpha(x) \geq 0$ satisfying \eqref{eq:csiszarg} we have for all $\tilde P$
\begin{equation}
I(\tilde P) + \lambda \rho(\tilde P) \geq \E{ \log \frac 1 {\alpha(X)}} \label{eq:gmax}
\end{equation}
with equality if and only if $\tilde P$ can be represented as in \eqref{eq:yg}, with the given $\alpha(x)$. 
\label{thm:nessuf}
\end{thm}

\begin{proof}[Proof of \thmref{thm:nessuf}]
Consider the function
\begin{align}
L(P_{Y|X}, P_{\bar Y}) &=  D(P_{Y|X} \| P_{\bar Y} | P_X)  + \lambda \E{\mathsf d(X, Y)} \label{eq:Lbarsc}\\
&= I(X; Y)  + D(Y\|\bar Y) + \lambda \E{\mathsf d(X, Y)} \label{sc:-dtilted1a}\\
&\geq I(X, Y) + \lambda \E{\mathsf d(X, Y)} \label{sc:-dtilted1}
\end{align}

Since equality in \eqref{sc:-dtilted1} holds if and only if $P_{Y} = P_{\bar Y}$, $F(\lambda)$ can be expressed as
\begin{align}
F(\lambda) &=\inf_{P_{\bar Y}} \inf_{P_{Y|X}}   L(P_{Y|X}, P_{\bar Y})   \label{sc:R(d)double}.
\end{align}

Denote 
\begin{equation}
\Sigma_{\bar Y}(x) \triangleq  \E{\exp ( -\lambda \sd(x, \bar Y) ) }.  \label{eq:sigmabarY}
\end{equation}
Since $\sd(x, y) \geq 0$,  we have $ 0 \leq \Sigma_{\bar Y}(x) \leq 1$ 
, and 
\lemref{lem:dv} applies to conclude that equality in 
\begin{equation}
D(P_{Y|X = x} \| P_{\bar Y})+ \lambda \E{\mathsf d(x, Y)|X = x}  \geq \log \frac 1  { \Sigma_{\bar Y}(x)},  \label{sc:-dtilted2}
\end{equation}
is achieved if and only if $P_{Y|X = x} = P_{\bar Y^*| X = x}$, where $P_{\bar Y^*| X = x}$ is determined from 
\begin{equation}
\log \frac{d P_{\bar Y^* | X = x}(y)}{dP_{\bar Y}(y)} + \lambda \sd(x, y) =  \log \frac {1} {\Sigma_{\bar Y}(x) }. \label{eq:tilted}
\end{equation}

Applying \eqref{sc:-dtilted2} to solve for the inner minimizer in \eqref{sc:R(d)double}, we obtain
\begin{align}
F(\lambda) &=   \inf_{P_{\bar Y}} \E{\log \frac 1 { \Sigma_{\bar Y}(X)}}  \label{eq:Rxmaxinf}\\
&= \E{\log \frac 1 { \Sigma_{Y^*}(X)}}  \label{sc:-dtilted4}
\end{align}
where \eqref{sc:-dtilted4} holds by the assumption \eqref{item:b1}. %

Although for a fixed $P_{\bar Y}$ we can always define the tilted distribution  
$P_{\bar Y^* | X}$ via \eqref{eq:tilted}, in general we cannot claim that the marginal distribution $P_{\bar Y^\star}$ that results after applying the random transformation $P_{\bar Y^* | X}$ to $P_X$ coincides with $P_{\bar Y}$. This happens if and only if $P_{\bar Y}$ is such that for $P_{\bar Y}$-a.e. $y$, 
\begin{equation}
\E{ \frac{\exp \left( - \lambda \mathsf d(X, y)\right) } {\Sigma_{\bar Y}(X) }}  = 1 \label{sc:-dtiltedoneeq}.
\end{equation}
Since by the assumption \eqref{item:b1}, these exists $P_{Y^* | X}$ that achieves \eqref{sc:-dtilted4}, condition \eqref{sc:-dtiltedoneeq} must hold for $P_{\bar Y} = P_{Y^*}$. Using this observation together with \eqref{sc:-dtilted4}, we conclude that $P_{Y^* | X}$ in \eqref{eq:yg} with $\alpha(x) = \alpha^*(x)$, where $\alpha^*(x)$ is defined in \eqref{eq:fstar}, 
is necessary and sufficient to achieve the minimum of $F(\lambda)$ in \eqref{eq:Fl}. In particular, \eqref{eq:yg} is a necessary condition for the minimizer. 

We now show that $\alpha^*(x)$ satisfies \eqref{eq:csiszarg}, which implies that both \eqref{eq:yg} and \eqref{eq:csiszarg} are necessary.  
Since $P_X \to P_{Y^* | X} \to P_{Y^*}$,  equality in \eqref{sc:-dtiltedoneeq} particularized to $P_{Y^*}$ holds for $P_{Y^*}$-a.s. $y$, which is equivalent to equality in \eqref{eq:csiszarg}. To show \eqref{eq:csiszarg} for all $y$, note using \eqref{eq:Rxmaxinf} that for any $P_{\bar Y}$, 
\begin{align}
\E{ \log \frac 1 {\Sigma_{ Y^*}(X)} }  &\leq   \E{ \log \frac 1 {\Sigma_{\bar Y}(X)} } \label{sc:-dtilted5}
\end{align}
For an arbitrary $\bar y \in \mathcal Y$ and $0 \leq \epsilon \leq 1$, let
\begin{equation}
 P_{\bar Y} = (1 - \epsilon)P_{Y^*} + \epsilon \delta_{\bar y} \label{eq:pertubY}
\end{equation}
for which 
\begin{equation}
\Sigma_{\bar Y}(x) =  (1 - \epsilon) \Sigma_{Y^*}(x) + \epsilon \exp\left( -\lambda^* \mathsf d(x, \bar y)\right)  \label{sc:-dtilted6}
\end{equation}
Substituting \eqref{sc:-dtilted6} in \eqref{sc:-dtilted5}, we obtain
\begin{align}
 0 &\geq \E{\log \frac {\Sigma_{\bar Y }(X)} {\Sigma_{ Y^*}(X)}  }  \label{sc:-dtilted7a}\\
 &=  \E{ \log \left[ 1 - \epsilon + \epsilon \frac{\exp\left( -\lambda \mathsf d(X, \bar y)\right) }{\Sigma_{ Y^* }(X)} \right] } \label{sc:-dtilted7}\\
 &= \log \left( 1 - \epsilon \right) + \E{\log \left[ 1 + \frac{\epsilon}{1 - \epsilon}  \frac{\exp\left( -\lambda \mathsf d(X, \bar y)\right) }{\Sigma_{ Y^* }(X)}  \right] } \label{sc:-dtilted8}
\end{align}
Since the difference quotient of the second term satisfies $0 \leq \frac 1 \epsilon \log \left(1 + \frac{\epsilon x}{1 - \epsilon} \right) \leq \frac{x \log e}{1 - \delta}$ for all $x \geq 0$ and $0 \leq \epsilon \leq \delta \leq 1$, by the dominated convergence theorem, the right derivative of \eqref{sc:-dtilted8} with respect to $\epsilon$ evaluated at $\epsilon = 0$ is
\begin{equation}
 \E{- 1+  \frac{\exp\left( -\lambda \mathsf d(X, \bar y)\right) }{\Sigma_{ Y^*}(X)}  } \log e \leq 0 
\end{equation}
where the inequality holds because otherwise \eqref{sc:-dtilted7a} would be violated for sufficiently small $\epsilon$. This concludes the proof that $\alpha^*(x)$ in \eqref{eq:fstar} satisfies  condition \eqref{eq:csiszarg}, so both \eqref{eq:yg} and \eqref{eq:csiszarg} in Theorem \ref{thm:nessuf} are necessary. 

The sufficiency of \eqref{eq:yg} and \eqref{eq:csiszarg} for $P_{Y^*|X}$ to achieve the minimum in \eqref{eq:Fl} follows from \eqref{eq:gmax}. 
To show \eqref{eq:gmax}, fix any $\alpha(x)$ satisfying \eqref{eq:csiszarg} and use the concavity of the logarithm to show that
\begin{align}
\two{&~}{}I(\tilde P) + \lambda \rho(\tilde P) \two{\notag\\}{}
\geq &~\E{\log \frac 1 {\Sigma_{Y^*}(X) }}  \\
=&~  \E{\log \frac 1 {\alpha(X) }} -  \E{\log \frac{\Sigma_{Y^*}(X)}{\alpha(X)}}\\
\geq&~ \E{\log \frac 1 {\alpha(X) }} - \log \EE_{P_X \times P_{Y^*}}{ \frac{\exp ( -\lambda \sd(X, Y^*) )}{\alpha(X)} } \label{eq:eqcond1}\\
\geq&~ \E{\log \frac 1 {\alpha(X) }} \label{eq:eqcond2}.
\end{align}
For the equality condition, observe that strict concavity of logarithm implies that  equality in \eqref{eq:eqcond1} holds if and only if the ratio $\frac{\Sigma_{Y^*}(X)}{\alpha(X)}$ is constant, while equality in \eqref{eq:eqcond2} holds if and only if that constant is $1$. 
 \end{proof}

 \section{Proof of \thmref{thm:parametric}}
 \label{sec:proof}
 
 Fix the source distribution $P_X$. For a transition probability kernel $P = P_{Y_1 Y_2|X}$, put 
\begin{align}
I_1(P) &\triangleq I(X; Y_1)\\
I_2(P) &\triangleq I(X; Y_1, Y_2)\\
\rho_1(P) &\triangleq \E{\sd_1(X, Y_1)}\\
\rho_2(P) &\triangleq  \E{\sd_2(X, Y_2)}
\end{align}

  \begin{lemma}
The function $R_2(d_1, d_2, R_1)$ is non-increasing as a function of each argument when the others are kept fixed, jointly convex, and 
\begin{equation}
R_2(d_1, d_2, R_1) = \inf_{\substack{P \colon \\
I_1(P) = R_1\\
\rho_1(P) = d_1\\
\rho_2(P) = d_2\\
}} I_2(P), \quad (d_1, d_2, R_1)\in \Omega \label{eq:eqsr},
\end{equation}
where the set $\Omega$ is defined in \eqref{eq:omega}.
\label{lem:convex}
\end{lemma}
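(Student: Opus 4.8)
The plan is to prove the three assertions of \lemref{lem:convex} in turn: monotonicity, joint convexity, and the identity \eqref{eq:eqsr}. Monotonicity is immediate from the definition \eqref{eq:RRdsr}--\eqref{eq:reg}: enlarging any of $d_1$, $d_2$, $R_1$ only enlarges the set of kernels $P=P_{Y_1Y_2|X}$ satisfying \eqref{eq:reg}, so the infimal total rate can only drop. For joint convexity, given two points $q^{(0)},q^{(1)}\in\mathbb R_+^3$ and $\theta\in[0,1]$, I would pick for each $i\in\{0,1\}$ and each $\epsilon>0$ a kernel $P^{(i)}$ with $I_1(P^{(i)})\le R_1^{(i)}$, $\rho_1(P^{(i)})\le d_1^{(i)}$, $\rho_2(P^{(i)})\le d_2^{(i)}$ and $I_2(P^{(i)})\le R_2(q^{(i)})+\epsilon$ (such kernels exist whenever $R_2(q^{(i)})<\infty$; if either value is $+\infty$ the convexity inequality is vacuous). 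Forming the mixture kernel $P_\theta\triangleq(1-\theta)P^{(0)}+\theta P^{(1)}$ and using that, for a fixed source $P_X$, the functionals $\rho_1,\rho_2$ are affine in the kernel while $I_1=I(X;Y_1)$ and $I_2=I(X;Y_1,Y_2)$ are convex in the kernel (mutual information is convex in the channel for a fixed input, by joint convexity of relative entropy, which holds on abstract spaces), one gets $\rho_j(P_\theta)\le(1-\theta)d_j^{(0)}+\theta d_j^{(1)}$, $I_1(P_\theta)\le(1-\theta)R_1^{(0)}+\theta R_1^{(1)}$ and $I_2(P_\theta)\le(1-\theta)R_2(q^{(0)})+\theta R_2(q^{(1)})+\epsilon$. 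Thus $P_\theta$ witnesses $R_2((1-\theta)q^{(0)}+\theta q^{(1)})\le(1-\theta)R_2(q^{(0)})+\theta R_2(q^{(1)})+\epsilon$; letting $\epsilon\downarrow0$ gives joint convexity on all of $\mathbb R_+^3$ (with the $+\infty$ convention off the finite region).

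For the identity \eqref{eq:eqsr}, write $\tilde R_2(d_1,d_2,R_1)$ for the infimum of $I_2(P)$ over kernels meeting all three constraints with equality. The bound $\tilde R_2\ge R_2$ is immediate, since the equality-constrained feasible set is contained in the inequality-constrained one. For the reverse inequality on $\Omega$, I would invoke assumption \eqref{item:b} to obtain an optimal kernel $P^\star$ with $I_2(P^\star)=R_2(d_1,d_2,R_1)$, and argue that on $\Omega$ each of $\rho_1(P^\star)\le d_1$, $\rho_2(P^\star)\le d_2$, $I_1(P^\star)\le R_1$ must hold with equality; then $P^\star$ itself lies in the equality-constrained set and $\tilde R_2\le I_2(P^\star)=R_2$. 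Two ingredients drive the activation of the constraints. First, if, say, $\rho_1(P^\star)=a_1<d_1$, then $P^\star$ is still feasible for the problem at $(a_1,d_2,R_1)$, so by the monotonicity already proved $R_2(\cdot,d_2,R_1)$ is constant on $[a_1,d_1]$; being in addition convex and non-increasing, it is then constant on all of $[a_1,\infty)$ in that coordinate. Second, combining such flatness with the strict-decrease property that defines $\Omega$, and -- where needed -- perturbing $P^\star$ by mixing it with a degenerate kernel (constant reproduction $Y_1=y_1^0$, $Y_2=y_2^0$, with $y_1^0,y_2^0$ chosen so that $\E{\sd_1(X,y_1^0)}$ and $\E{\sd_2(X,y_2^0)}$ are finite) pushed into the slack direction, one produces either a feasible kernel of strictly smaller $I_2$, or a nondegenerate box of $(d_1,d_2,R_1)$-values on which $R_2$ is constant; either outcome contradicts $(d_1,d_2,R_1)\in\Omega$. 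Running this over the possible subsets of slack constraints forces all three to be active.

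I expect the last step to be the main obstacle. Monotonicity and the mixing argument are routine, and the "$\tilde R_2\ge R_2$" half of \eqref{eq:eqsr} is trivial. The difficulty is that slack in a single coordinate, by itself, only forces $R_2$ to be flat along that one coordinate, which is perfectly compatible with membership in $\Omega$ (whose definition requires all three perturbations $\epsilon_1,\epsilon_2,\epsilon_3$ to be strictly positive). Converting "not all constraints active" into a genuine contradiction with $\Omega$ therefore requires a careful case analysis over which subset of the constraints is slack, in each case using the degenerate-kernel perturbation together with the convexity (hence continuity on the interior of the finite region) of $R_2$ to rule out that configuration.
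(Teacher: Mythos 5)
Your monotonicity and convexity arguments match the paper's. The paper likewise mixes feasible kernels, noting that $I_2(P)$ is convex in the kernel by joint convexity of relative entropy, and that $I_1(P)$ is then convex as the composition of $I_2$ with the affine marginalization $P \mapsto P_{Y_1|X}$; your use of $\epsilon$-near-optimal kernels rather than exact optimizers is a small refinement in rigor, since the standing assumption \eqref{item:b} only supplies an optimizer at the single point of interest whereas the paper takes exact optimizers $P_a$, $P_b$ at two generic points.

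For \eqref{eq:eqsr} you take a genuinely different route, and it is not closed. Rather than producing an optimal kernel $P^\star$ and arguing that all three constraints are active, the paper argues purely at the level of value functions: it writes the exact representation $R_2(d_1,d_2,R_1)=\inf_{\tilde d_1\le d_1,\,\tilde d_2\le d_2,\,\tilde R_1\le R_1}\tilde R_2(\tilde d_1,\tilde d_2,\tilde R_1)$ --- your ``$\tilde R_2\ge R_2$'' direction plus the decomposition of the inequality-constrained feasible set into a union of equality-constrained ones --- and then asserts that $R_2$ is strictly decreasing in each argument on $\Omega$, so the infimum is attained at the corner $(\tilde d_1,\tilde d_2,\tilde R_1)=(d_1,d_2,R_1)$. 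This dispenses with the degenerate-kernel perturbation and the optimizer $P^\star$ altogether. That said, you have correctly put your finger on the crux: slack in a single coordinate only yields flatness of $R_2$ along that one coordinate, while the definition of $\Omega$ only quantifies over a simultaneous increase of all three coordinates, so passing from ``flat in one coordinate'' to a contradiction with $\Omega$ is the whole step. This is precisely where the paper's ``strictly decreasing in all arguments'' assertion does the work. You explicitly leave the corresponding case analysis unfinished, so your proposal as written does not establish the inequality $\tilde R_2 \le R_2$; if you pursue it, adopting the paper's $\inf\tilde R_2$ reformulation is a cleaner scaffold than chasing the optimizer, since the per-coordinate convexity-and-monotonicity argument you already derived then applies symmetrically in each argument and reduces matters to that single corner claim.
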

\begin{proof}®
That $R_2(d_1, d_2, R_1)$ is non-increasing is obvious by definition. To show convexity, note first that  since $u \log \frac u v$ is a convex function of $(u, v)$, $D(P \| Q)$ is a convex function of $(P, Q)$, and so $I_2(P)$ is a convex function of $P$.  Therefore, $I_1(P)$ is convex as a composition of a convex function $I_2(P)$ with  an affine mapping $P \mapsto P_{Y_1 | X}$. 

Let the probability kernel $P_a$ attain $R_2( d_1^a,  d_2^a,  R_1^a)$ and $P_{b}$ attain  $R_2(d_1^b, d_2^b, R_1^b)$. Let 
$P = \epsilon P_{a} + (1 - \epsilon) P_{b}$, $d_1 = \epsilon d_1^a + (1 - \epsilon) d_1^b$, $d_2 = \epsilon d_2^a + (1 - \epsilon) d_2^b$, $R_1 = \epsilon R_1^a + (1 - \epsilon) R_1^b$. 
Since $I_1(P)$ is convex and $ \rho_1(P)$, $\rho_2(P)$ are affine,
\begin{align}
I_1(P) &\leq R_1 \label{eq:constr1}\\
 \rho_1(P) &\leq d_1\\
 \rho_2(P) &\leq d_2. \label{eq:constr3}
\end{align}
Furthermore, by convexity of $I_2(P)$,
\begin{align}
I_2(P) &\leq \epsilon I_2(P_a) + (1 - \epsilon)   I_2(P_b)\\
&= \epsilon R_2( d_1^a,  d_2^a, R_1^a)  \label{eq:Iconv}
\two{\\
&\phantom{=} }{} + (1 - \epsilon) R_2(d_1^b, d_2^b, R_1^b). \two{\notag}{}
\end{align}
Convexity of $R_2(d_1, d_2, R_1)$ follows by minimizing the left side of \eqref{eq:Iconv} over $P$ satisfying the constraints \eqref{eq:constr1}--\eqref{eq:constr3}. To show \eqref{eq:eqsr}, rewrite  $ R_2(d_1, d_2, R_1)$ as
\begin{align}
  R_2(d_1, d_2, R_1) = \inf_{\substack{\tilde R_1 \leq R\\
\tilde d_1 \leq d_1\\
\tilde d_2 \leq d_2
}} \tilde R_2(\tilde d_1, \tilde d_2, \tilde R_1), \label{eq:R2zerodual}
\end{align}
where $\tilde R_2(\cdot, \cdot, \cdot)$ denotes the function in the right side of \eqref{eq:eqsr}. Since for $(d_1, d_2, R_1) \in \Omega$, the function $R_2(d_1, d_2, R_1)$ is strictly decreasing in all arguments, the infimum \eqref{eq:R2zerodual} is achieved at the boundary, and \eqref{eq:eqsr} follows.
\end{proof}

 Put 
\begin{align}
F(\nu_1, \lambda_1, \lambda_2) \triangleq \max \{  h \geq 0 \colon \forall (d_1, d_2, R_2) &
 \two{\notag\\}{\quad} h - \lambda_1 d_1 - \lambda_2 d_2 - \nu_1 R_1 \leq R_2(d_1, d_2, R_2) \two{&}{}\},
\end{align}
i.e. $F(\nu_1, \lambda_1, \lambda_2)$ is the maximum of the $R_2$ axis intercepts of the hyperplanes $h - \lambda_1 d_1 - \lambda_2 d_2 - \nu_1 R_1$ which have no point inside of the rate-distortion region in \eqref{eq:reg}, i.e. for $(\lambda_1, \lambda_2, \nu_1) \geq 0$ %
\begin{align}
F(\nu_1, \lambda_1, \lambda_2) &\triangleq \inf_{P_{Y_1|X},P_{ Y_2 | XY_1}} L \left( P_{Y_1|X}, P_{Y_2 |X Y_1} \right), \label{eq:Lstar}
\end{align}
where
 \begin{eqnarray*}
L \left( P_{Y_1|X}, P_{Y_2 |X Y_1} \right)  
&\!\!\!\triangleq & \!\!\!I(X; Y_1,Y_2)+ \nu_1 I(X; Y_1)   
\two{\\
&\!\!\!&\!\!\!}{}+ \lambda_1 \E{\sd_1(X, Y_1)} + \lambda_2 \E{\sd_2(X, Y_2)}. \notag
\end{eqnarray*}
In other words, $L \left( P_{Y_1|X}, P_{Y_2 |X Y_1} \right)  $ is the Lagrangian and \eqref{eq:Lstar} is the Lagrangian dual problem. 

Since $R_2(d_1, d_2, R_1)$ is convex and nonincreasing, to each $(d_1, d_2, R_1)$ such that $R_2 (d_1, d_2, R_1) < \infty$, there exists $(\nu_1, \lambda_1, \lambda_2) \geq 0$ such that the hyperplane $h - \lambda_1 d_1 - \lambda_2 d_2 - \nu_1 R_1$ that passes through $(d_1, d_2, R_1)$ is tangent to the surface $R_2(d_1, d_2, R_1)$, and
\begin{align}
 &~ R_2(d_1, d_2, R_1) 
 \two{\notag\\
 =&~}{=} \max_{(\nu_1, \lambda_1, \lambda_2) \geq 0} ( F(\nu_1, \lambda_1, \lambda_2) - \nu_1 R_1-\lambda_1 d_1-\lambda_2 d_2). \label{eq:lagr2}
\end{align}
\thmref{thm:parametric} is an immediate consequence of \eqref{eq:lagr2} and \thmref{thm:nssuccref} below.
\begin{thm}[Necessary and sufficient conditions for an optimizer]
In order for $(P_{Y^*_1|X}, P_{Y^*_2 | X Y_1^*})$ to achieve the infimum in \eqref{eq:Lstar}, it is necessary and sufficient that 
\begin{align}
\frac{d P_{ X | Y_1^* = y_1}}{dP_{X}}(x) &=  \frac{\exp \left(-\frac{\lambda_1}{1+\nu_1} \sd_1(x, y_1) \right)}{\beta_1(x)\beta_2(x| y_1)^{-\frac{1}{1+\nu_1}}}    \label{eq:Y1star0}\\
\frac{d P_{ X| Y_1^* = y_1, Y_2^* = y_2 }}{dP_{X | Y_1^*  = y_1}} &=  \frac {\exp (-\lambda_2 \sd_2(x, y_2))} {\beta_2(x| y_1)},    \label{eq:Y2star0}
\end{align}
where $0 \leq \beta_1(x)\leq 1$, $0 \leq \beta_2(x| y_1)\leq 1$ satisfy \eqref{eq:sigma12a}, \eqref{eq:sigma1a}.

Furthermore, the choice
\begin{align}
 \beta_1^*(x) 
  &= \E{\beta_2^*(x| Y_1^*)^{\frac{1}{1+\nu_1}}\exp \left(-\frac{\lambda_1}{1+\nu_1} \sd_1(x, Y_1^*)  \right)} \label{eq:g1starb}\\
 \beta_2^*(x| y_1) &= 
 \E{\exp(- \lambda_2 \sd_2(x, Y_2^*)) | Y_1^* = y_1} 
  \label{eq:g2starb}
\end{align}
satisfies \eqref{eq:sigma12a}, \eqref{eq:sigma1a}, \eqref{eq:Y1star0}, \eqref{eq:Y2star0}.

Finally, for any $\beta_1(x) \geq 0$, $\beta_2(x| y_1) \geq 0$ satisfying \eqref{eq:sigma12}, \eqref{eq:sigma1},
we have for all $P_{Y_1 Y_2 | X}$
\begin{equation}
L \left( P_{Y_1|X}, P_{Y_2 | XY_1} \right) \geq (1+\nu_1)\E{ \log \frac 1 {\beta_1(X)}} \label{eq:g1max}
\end{equation}
with equality if and only if $(P_{Y_1|X}, P_{Y_2 | XY_1})$ can be represented as in \eqref{eq:Y1star0}, \eqref{eq:Y2star0}, with the given $(\beta_1, \beta_2)$.%
\label{thm:nssuccref}
\end{thm}

\begin{proof}[Proof of Theorem~\ref{thm:nssuccref}]
The proof builds on the groundwork laid out in our proof of \thmref{thm:nessuf}. In the first part of the proof, we will use the Donsker-Varadhan lemma and the assumption of the existence of optimizing kernels to characterize the optimal $\beta_1^*(x)$, $\beta_2^*(x|y_1)$ as well as  $P_{Y_1^* | X}$ and $P_{Y_2^* | XY_1^*}$. We will apply the Donsker-Varadhan lemma twice, first for the second stage and then, thinking of the optimized rate at second stage as modifying the distortion measure at first stage, for the first stage. This reasoning, concluding at \eqref{eq:sigma2eq1} below, will also ensure that equalities in \eqref{eq:sigma12a} and \eqref{eq:sigma1a} hold for $P_{Y_1^* Y_2^*}$-a.e. $(y_1, y_2)$. 

The second part of the proof, \eqref{eq:y1y2ae}--\eqref{eq:der}, shows the necessity of \eqref{eq:sigma12a} and \eqref{eq:sigma1a} for all $(y_1, y_2)$. This involves perturbing $P_{Y_1^* Y_2^*}$ by a delicately chosen auxiliary distribution and using the optimality of $P_{Y_1^* Y_2^*}$ to claim \eqref{eq:sigma12a} and \eqref{eq:sigma1a}. %

Having established these necessary conditions, we will proceed to show their sufficiency in the third and final part of the proof, \eqref{eq:optcond5-}--\eqref{eq:optcond5+}.
 
First, we show that
\begin{align}
\inf_{P_{Y_1|X},\, P_{Y_2 |X Y_1}} L \left( P_{Y_1|X}, P_{Y_2 | X Y_1} \right) = \E{ \log \frac 1 {\beta_1^*(X)^{1+\nu_1}}}.
\end{align}
For fixed probability kernels $P_{\bar Y_1}$ and $P_{\bar Y_2 | \bar Y_1}$, consider the function
\begin{eqnarray}
\lefteqn{\bar{L} \left( P_{Y_1|X}, P_{Y_2 | XY_1}, P_{\bar Y_1}, P_{\bar Y_2 | \bar Y_1} \right)}  \label{eq:Lbar}
\\
  & \!\!\!\triangleq& \!\!\!D(P_{Y_2 | XY_1} \| P_{ \bar Y_2 | \bar Y_1 } | P_{XY_1}) + (1+\nu_1) D(P_{Y_1| X} \| P_{\bar Y_1} | P_X)  
  \two{\notag\\
  &\!\!\!& \!\!\!}{}+ \lambda_1 \E{\sd_1(X, Y_1)} + \lambda_2 \E{\sd_2(X, Y_2)}. \notag
\end{eqnarray}
Since
\begin{align}
\two{&~}{} L \left( P_{Y_1^*|X}, P_{Y_2^* | X Y_1^*} \right) \two{\notag\\}{}
 = &~  \inf_{ P_{Y_1|X},\, P_{Y_2 | XY_1},\, P_{\bar Y_1},\, P_{\bar Y_2 | \bar Y_1}}  \bar{L} \left( P_{Y_1|X}, P_{Y_2 | XY_1},  P_{\bar Y_1}, P_{\bar Y_2 | \bar Y_1} \right)  \\
=&~  \bar{L} \left( P_{Y_1^*|X}, P_{Y_2^* | XY_1^*},  P_{Y_1^*}, P_{Y_2^* | Y_1^*} \right) 
\end{align}
we have
\begin{align}
\bar{L} \left( P_{Y_1|X}, P_{Y_2 | XY_1},  P_{\bar Y_1}, P_{\bar Y_2 | \bar Y_1}  \right) &\geq L \left( P_{Y_1^*|X}, P_{Y_2^* | XY_1^*} \right), \label{eq:Lg}
\end{align}
with equality if and only if $(P_{Y_1|X}, P_{Y_2 | XY_1}, P_{\bar Y_1}, P_{\bar Y_2 | \bar Y_1}) = (P_{Y^*_1|X}, P_{Y^*_2 | XY_1^*}, P_{Y_1^*}, P_{Y_2^* | Y_1^*})$.
Applying Lemma \ref{lem:dv} twice, we compute the minimum of the left side of \eqref{eq:Lg} particularized to $ P_{\bar Y_1} = P_{Y_1^*}$ and $P_{\bar Y_2 | \bar Y_1} = P_{Y_2^* | Y_1^*} $:
\begin{align}
&~\bar{L} \left( P_{Y_1|X}, P_{Y_2 | XY_1},  P_{Y_1^*}, P_{Y_2^* | Y_1^*} \right)  \\
\geq &~ \!\!\!\inf_{P_{Y_1|X}} \bigg\{ (1+\nu_1) D(P_{Y_1 | X} \| P_{Y_1^*} | P_X) + \lambda_1 \E{\sd_1(X, Y_1)} 
 \notag \\
& + \inf_{P_{Y_2 | XY_1}} \left\{  D(P_{Y_2 | XY_1} \| P_{  Y_2^* |  Y_1^* } | P_{XY_1})  + \lambda_2 \E{\sd_2(X, Y_2)}\right\} \!\!\bigg\} \notag\\
 = &~\!\!\!\inf_{P_{Y_1|X}} \bigg\{ (1+\nu_1)D(P_{Y_1 | X} | P_{Y_1^*} | P_X) + \lambda_1 \E{\sd_1(X, Y_1)} 
 \two{\notag\\
&~}{}+ \E{ \log \frac 1 {\beta_2^*(X | Y_1)}} \bigg\}  \\ 
= &~(1+\nu_1)\E{\log \frac{1}{\beta_1^*(X)}} \label{eq:s1}
\end{align}
where $\beta_1^*(x)$ and $\beta_2^*(x|y_1)$ are given in \eqref{eq:g1starb} and \eqref{eq:g2starb}, respectively, and the optimizing  $P_{Y_1^* | X}$ and $P_{Y_2^* | XY_1^*}$ are specified in \eqref{eq:Y1star0} and \eqref{eq:Y2star0}, letting $\beta_1(x) = \beta_1^*(x)$ and $\beta_2(x|y_1) = \beta_2^*(x|y_1)$ therein.

We proceed to show that $\beta_1^*(x)$ and $\beta_2^*(x|y_1)$ satisfy \eqref{eq:sigma12a} and \eqref{eq:sigma1a}. For $P_{Y_1^*}$-a.e. $y_1$, we take expectations with respect to $P_X$ of both sides of \eqref{eq:Y1star} to conclude that
\begin{equation}
\Sigma_1(y_1) = 1 \; . 
\end{equation}
Likewise, for $P_{Y_1^* Y_2^*}$-a.e. $(y_1, y_2)$, we take expectations with respect to $P_{XY_1^*}$ of both sides of \eqref{eq:Y2star} to conclude that 
\begin{equation}
\Sigma_2(y_1, y_2) = 1 \; . \label{eq:sigma2eq1}
\end{equation} 
We next proceed to show that
\begin{equation}
\Sigma_2(y_1, y_2) \leq 1 \quad P_{Y_1^*}\text{-a.e. } y_1, \forall y_2 \in \mathcal Y_2 \label{eq:y1y2ae}
\end{equation}

Particularizing the left side of \eqref{eq:Lg} to $P_{Y_1 | X} = P_{Y_1^* | X}$, $P_{\bar Y_1} = P_{Y_1^*}$, $P_{\bar Y_2 | \bar Y_1} = P_{Y_2^* | Y_1^*}$ we apply Lemma \ref{lem:dv} to characterize the minimum of the left side of \eqref{eq:Lg} as
\begin{eqnarray}
\lefteqn{ \bar{L} \left( P_{Y_1^*|X}, P_{Y_2 | XY_1^*},  P_{Y_1^*}, P_{Y_2^* | Y_1^*} \right)}   \label{eq:optcond0} \\
& \geq& (1+\nu_1)I(X; Y_1^*) + \lambda_1 \E{\sd_1(X, Y_1^*)} 
\two{\notag \\
 &&}{} +
  \inf_{P_{Y_2 | XY_1^*}} \bigg\{  D(P_{Y_2 | X Y_1^*} \| P_{  Y_2^* |  Y_1^* } | P_{XY_1^*})  
  \two{\notag \\
&&  \;\;\;\;\;}{}+ \lambda_2 \E{\sd_2(X, Y_2)}  \bigg\}
\notag
\end{eqnarray}
To evaluate the infimum in \eqref{eq:optcond0}, we apply \thmref{thm:csiszarg} to conclude that for $P_{Y_1^*}$-a.e. $y_1$, it holds that
\begin{align}
&~ D(P_{Y_2 | X,Y_1^* = y_1} \| P_{  Y_2^* |  Y_1^* = y_1 } | P_{X | Y_1^* = y_1}) 
\two{\notag \\
 \phantom{\geq}&~}{}
 + \lambda_2 \E{\sd_2(X, Y_2) | Y_1^* = y_1} 
 \two{\notag\\}{}
 \geq \two{&~}{} \E{ \left. \log \frac 1 {\beta_2^*(X | y_1)} \right| Y_1^* = y_1} 
\end{align}
with
\begin{align}
\E{\frac{\exp(- \lambda_2 \sd_2(X, y_2))}{\beta_2^*(X| y_1)} | Y_1^* = y_1} \leq 1 \;\; ~ P_{Y_1^*}\text{-a.e. } y_1, \forall y_2 
\end{align}
which, using \eqref{eq:Y1star}, is equivalent to \eqref{eq:y1y2ae}. 

To finish the proof of \eqref{eq:sigma12a} and \eqref{eq:sigma1a}, it remains to show that for all $y_1, y_2$ outside of the support of $P_{Y_1^* Y_2^*}$, \eqref{eq:sigma12a} and \eqref{eq:sigma1a} hold. Consider
\begin{align}
&~ \bar{L} \left( P_{Y_1|X}, P_{Y_2 | XY_1},  P_{\bar{Y}_1}, P_{\bar{Y}_2 | \bar{Y}_1} \right) \notag \\
\geq &~ \inf_{P_{Y_1|X}} \bigg\{ (1+\nu_1) D(P_{Y_1 | X} \| P_{\bar{Y}_1} | P_X) + \lambda_1 \E{\sd_1(X, Y_1)} 
 \notag \\
&
+ \inf_{P_{Y_2 | XY_1}} \left\{  D(P_{Y_2 | XY_1} \| P_{  \bar{Y}_2 |  \bar{Y}_1 } | P_{XY_1})  + \lambda_2 \E{\sd_2(X, Y_2)}\right\} \!\!\bigg\} \\
=&~ \inf_{P_{Y_1|X}} \bigg\{ (1+\nu_1) D(P_{Y_1 | X} \| P_{\bar{Y}_1} | P_X) + \lambda_1 \E{\sd_1(X, Y_1)} 
\two{ \notag \\ &}{}
+ \E{ \log \frac 1 {\bar \beta_2(X|Y_1)}} \!\!\bigg\} \\
= &~(1+\nu_1)\E{\log \frac{1}{\bar \beta_1(X)}} \label{eq:g1bar0}
\end{align}
where
\begin{align}
 \bar \beta_1(x) 
   &= \E{\bar{\beta}_2(x| \bar{Y}_1)^{\frac{1}{1+\nu_1}}\exp \left(-\frac{\lambda_1}{1+\nu_1} \sd_1(x, \bar{Y}_1)  \right)} \notag \\
 \bar \beta_2(x| y_1) &= \E{\exp(- \lambda_2 \sd_2(x, \bar Y_2)) | \bar Y_1 = y_1} \; . 
\end{align}
Due to \eqref{eq:Lg},
\begin{align}
\inf_{P_{\bar Y_1}, P_{\bar Y_2 | \bar Y_1}}  \E{\log \frac{1}{\bar \beta_1(X)}} = \E{\log \frac 1 {\beta_1^*(X)}} \label{eq:g1barinf}
\end{align}

Now, we choose $P_{\bar Y_1}$ and $P_{\bar Y_2 | \bar Y_1}$ (not independently of each other!) as
\begin{align}
P_{\bar Y_1 \bar Y_2}(y_1, y_2) = \begin{cases}
(1 - \epsilon) P_{ Y^*_1 Y^*_2}(y_1, y_2)				&\!\!\text{for $P_{Y_1^*}$-a.e. $y_1$}\\
 \epsilon \delta_{y_1} P_{\bar Y_2 | \bar Y_1 = y_1}(y_2) & \!\!\text{otherwise}
\end{cases} 
\end{align}
for some $0 \leq \epsilon \leq 1$, where $P_{\bar Y_2 | \bar Y_1}$ is an arbitrary transition probability kernel.

With this choice, 
\begin{align}
\bar \beta_1(x)  =&  (1 - \epsilon) \beta_1^*(x) \two{\\
\phantom{=}& \notag}{}+ \epsilon  \bar \beta_2 (x|  y_1)^{\frac{1}{1+\nu_1}}\exp \left(-\frac{\lambda_1}{1+\nu_1} \sd_1(x, y_1)  \right)\\
  \bar \beta_1^\prime(x) |_{\epsilon = 0} =& - \beta_1^*(x) \two{\\
  \phantom{=}& \notag}{} + \bar \beta_2 (x|  y_1)^{\frac{1}{1+\nu_1}}\exp \left(-\frac{\lambda_1}{1+\nu_1} \sd_1(x, y_1)  \right)
\end{align}
 Due to \eqref{eq:g1barinf}, the minimum of \eqref{eq:g1bar0} is attained at $\epsilon = 0$, so its right derivative with respect to $\epsilon$ evaluated at  $\epsilon  = 0$ must be nonnegative: 
\begin{align}
\two{&~}{} \left. \frac{\partial}{\partial \epsilon} \E{ \log \frac 1 {\bar \beta_1(X)}  } \right|_{\epsilon = 0} \label{eq:dif0a} 
\two{\\}{}
=&~ 1 - \E{\frac{\bar \beta_2 (X|  y_1)^{\frac{1}{1+\nu_1}}\exp \left(-\frac{\lambda_1}{1+\nu_1} \sd_1(X, y_1)  \right)}{\beta_1^*(X)}} \two{\notag}{} \\
\geq&~ 0, \label{eq:dif0}
  \end{align}
 and  \eqref{eq:sigma1a} follows by substituting $P_{\bar Y_2 |\bar Y_1} = P_{Y_2^* | Y_1^*}$ in \eqref{eq:dif0}. Bringing the differentiation inside of the expectation is permitted by the dominated convergence theorem: the negative of the integrand in \eqref{eq:dif0a} is $\log ((1-\epsilon) a + \epsilon b) = \log (1 - \epsilon) + \log a + \log \left( 1 + \frac {\epsilon}{1 - \epsilon} \frac{b}{a} \right)$, for some $a > 0$, $b > 0$, and the difference quotient of the last term is bounded as $0 \leq \frac 1 \epsilon \log  \left( 1 + \frac {\epsilon}{1 - \epsilon} \frac{b}{a} \right) \leq \frac{b}{a} \frac{\log e}{1 - \delta}$, for all $0 \leq \epsilon \leq \delta < 1$. 
 
  To show \eqref{eq:sigma12a}, notice that \eqref{eq:dif0} implies that the necessary condition for $P_{Y_1^* | X}$, $P_{Y_2^* | X Y_1^*}$ to achieve the minimum is that \eqref{eq:dif0} holds for all choices of the auxiliary kernel $P_{\bar Y_2 | \bar Y_1}$, and so
\begin{align}
 \sup_{P_{\bar Y_2 | \bar Y_1}}\E{\frac{\bar \beta_2 (X|  y_1)^{\frac{1}{1+\nu_1}}\exp \left(-\frac{\lambda_1}{1+\nu_1} \sd_1(X, y_1)  \right)}{\beta_1^*(X)}} \leq 1 \; .\label{eq:dif1}
\end{align}
To simplify \eqref{eq:dif1}, we will find the conditions under which $P_{Y_2^* | Y_1^* = y_1}$ attains the supremum in the left side of \eqref{eq:dif1}. Put
\begin{equation}
P_{\bar Y_2 | \bar Y_1} = (1 - \epsilon) P_{Y_2^* | Y_1^*} + \epsilon \delta_{y_2}.  
\end{equation}
 With this choice, 
\begin{align}
\bar \beta_2(x|y_1) &= (1 - \epsilon) \beta_2^*(x|y_1) + \epsilon \exp(- \lambda_2 \sd_2(x, y_2)) \\
 \frac{\partial}{\partial \epsilon}  \log \bar \beta_2(x|y_1) &~\big|_{\epsilon = 0} = -1 + \frac{\exp(- \lambda_2 \sd_2(x, y_2))}{\beta_2^*(x | y_1)}
\end{align}
 The right derivative of the expression in the left side of \eqref{eq:dif1} with respect to $\epsilon$ evaluated at $\epsilon = 0$ is 
\two{displayed in \eqref{eq:der} below}{given by
 \begin{align}
\E{\frac{\beta_2^* (X|  y_1)^{\frac{1}{1+\nu_1}}\exp \left(-\frac{\lambda_1}{1+\nu_1} \sd_1(X, y_1)  \right)}{\beta_1^*(X)}  \left(  1 - \frac{\exp \left(- \lambda_2 \sd_2(X, y_2)   \right) }{\beta_2^* (X|  y_1)} \right)} \geq 0, \label{eq:der}
\end{align}
 } and is equivalent to  \eqref{eq:sigma12a}. Note that bringing the differentiation inside of the expectation is allowed by the dominated convergence theorem: the difference quotient of the integrand in \eqref{eq:dif1} is proportional to $\frac{((1 - \epsilon) a + \epsilon b)^{\nu_2} - a^{\nu_2} }{\epsilon}$, for $a \geq 0$, $b \geq 0$, which is bounded below by $0$ and above by a constant times $a^{\nu_2 - 1} b$ in the range $\epsilon \leq \delta < 1$, for some $\delta$.

We proceed to show \eqref{eq:g1max}, which will imply the sufficiency part. We apply \thmref{thm:csiszarg} twice to write
\begin{align}
\two{&~}{} L \left( P_{Y_1|X}, P_{Y_2 | XY_1}\right) \two{\notag \\}{}
\geq&~ (1+\nu_1)I(X; Y_1) + \lambda_1 \E{\sd_1(X, Y_1)} \two{\notag\\
&~}{} +   I(X; Y_2 | Y_1)  + \lambda_2 \E{\sd_2(X, Y_2)} \label{eq:optcond5-}\\
\geq&~ (1+\nu_1)I(X; Y_1) + \lambda_1 \E{\sd_1(X, Y_1)} + \E{\log \frac 1 {\beta_2(X |Y_1)}} 
 \label{eq:optcond4}\\
\geq&~ (1+\nu_1) \E{\log \frac 1 {\beta_1(X)}} \label{eq:optcond5}
 \end{align}
where \eqref{eq:optcond4} holds for all $\beta_2(x| y_1) \geq 0$ satisfying
\begin{align}
\E{\frac{\exp(- \lambda_2 \sd_2(X, y_2))}{\beta_2(X| y_1)} | Y_1 = y_1} \leq 1 ~\forall (y_1, y_2) \in \mathcal Y_1 \times \mathcal Y_2
\label{eq:s21a}
\end{align}
with equality if and only if 
$(P_{ Y_2 | X,Y_1 = y_1}, \beta_2) = ( P_{Y_2^* | , X, Y_1^* = y_1}, \beta_2^*)$ for $P_{Y_1}$-a.e. $y_1$. 

Likewise, \eqref{eq:optcond5} holds for all $\beta_1(x) \geq 0$ satisfying \eqref{eq:sigma1}, with equality if and only if
\begin{align}
\frac{d P_{ X | Y_1 = y_1}}{dP_{X}}(x) &=  \frac {\beta_2(x| y_1)^{\frac{1}{1+\nu_1}}\exp \left(-\frac{\lambda_1}{1+\nu_1} \sd_1(x, y_1) \right)} {\beta_1(x)}  \label{eq:xy1}\\
 \beta_1(x) 
  &= \E{\beta_2(x| Y_1)^{\frac{1}{1+\nu_1}}\exp \left(-\frac{\lambda_1}{1+\nu_1} \sd_1(x, Y_1)  \right)} \label{eq:optcond5+}.
\end{align}

Substituting \eqref{eq:xy1} into \eqref{eq:s21a}, we obtain \eqref{eq:sigma12}, and \eqref{eq:g1max} follows, together with condition for equality. 
\end{proof}

  \two{
  \vspace*{20pt}
 \begin{strip}
\normalsize
\hrulefill
\begin{align}
\E{\frac{\beta_2^* (X|  y_1)^{\frac{1}{1+\nu_1}}\exp \left(-\frac{\lambda_1}{1+\nu_1} \sd_1(X, y_1)  \right)}{\beta_1^*(X)}  \left(  1 - \frac{\exp \left(- \lambda_2 \sd_2(X, y_2)   \right) }{\beta_2^* (X|  y_1)} \right)} \geq 0, \label{eq:der}
\end{align}
\hrulefill
\end{strip}
}{}

\section{Iterative algorithm}
\label{sec:algo}
\subsection{Computation of single stage rate-distortion function}
In the context of finite source and reproduction alphabets, an algorithm for computation of rate-distortion functions was proposed by Blahut \cite{blahut1972computation}. Below, we state it for general alphabets in \algref{algo:blahut} and provide its convergence analysis in \thmref{thm:blahutconver}. In \secref{sec:blahutsucc} below, we generalize these results to successive refinement.  
\begin{algorithm}
\SetKwInOut{Input}{input}\SetKwInOut{Output}{output}
\Input{ $\lambda > 0$; maximum number of iterations $K$.}
\Output{ An estimate $F_K(\lambda)$ of $F(\lambda)$, the Lagrange dual of $R(d)$, defined in \eqref{eq:Fl}.
}
Fix $P_{Y_0}$.\\
\For{$ k = 1, 2, \ldots, K$}{

Compute, using \eqref{eq:sigmabarY}, $\Sigma_{Y_{k-1}}(x)$\;
Compute, using \eqref{eq:tilted}, the transition probability kernel $P_{Y_{k} | X}$ that achieves the minimum in 
\begin{equation}
F_k(\lambda) =  \min_{P_{Y|X}}   L(P_{Y|X}, P_{Y_{k-1}}),
\end{equation}
and $L$ is defined in \eqref{eq:Lbarsc}\; 
Record the corresponding output distribution $P_{Y_k}$, $P_X \to P_{Y_{k} | X} \to P_{Y_{k}}$\; 

}
\caption{The generalized Blahut algorithm.}\label{algo:blahut}
\end{algorithm}

\begin{thm}
Suppose $Y^*$ attains the minimum in \eqref{eq:Fl}, and let $Y_0$ be such that  $D(Y^* \| Y_0) < \infty$. Consider \algref{algo:blahut}. The sequence $F_k(\lambda)$ is monotonically decreasing to $F(\lambda)$, and the convergence speed is bounded as
 \begin{equation}
 F_k(\lambda) - F(\lambda) \leq \frac{D(Y^* \| Y_0)}{k}. \label{eq:convspeed}
\end{equation}
\label{thm:blahutconver}
\end{thm}

\begin{proof}
The analysis below is inspired by Csisz\'ar \cite{csiszar1974computation}. From \eqref{sc:-dtilted1a}, we have
\begin{align}
F_{k-1}(\lambda)
&= L(P_{Y_k|X}, P_{Y_{k}}) + D(Y_k \| Y_{k-1}) \label{eq:Lkk}\\
&\geq F_k(\lambda) + D(Y_k \| Y_{k-1}),
\end{align}
 and 
\begin{align}
 F_{k}(\lambda) \leq F_{k-1}(\lambda), \label{eq:decreasing}
\end{align}
 with equality if and only if $Y_k = Y_{k-1}$, which implies that $F_{k-1}(\lambda) = F_{k}(\lambda) = F(\lambda)$. 
 
 Taking an expectation of \eqref{eq:tilted} (particularized to $\bar Y = Y_{k-1}$) with respect to $P_{X Y^*}$, we conclude
\begin{align}
F_k(\lambda) &= F(\lambda) + D(Y^* \| Y_{k-1}) - D(P_{Y^*|X} \| P_{Y_{k} | X} | P_X) \\
&\leq F(\lambda) + D(Y^* \| Y_{k-1}) - D(Y^*\| Y_k)  \label{eq:dp}%
\end{align}
where \eqref{eq:dp} holds by the data processing inequality for relative entropy.  

 To show \eqref{eq:convspeed}, we apply \eqref{eq:decreasing} and \eqref{eq:dp}  as follows. 
\begin{align}
K F_K(\lambda) - K F(\lambda) &\leq  \sum_{k=1}^K F_{k}(\lambda) - K F(\lambda)\\
&\leq \sum_{k = 1}^K ( D(Y^* \| Y_{k-1}) - D(Y^* \| Y_{k}) )\\
&= D(Y^*\| Y_0) - D(Y^*\| Y_K). 
\end{align}
\end{proof}

Note that $D(Y^* \| Y_0) < \infty$ is a sufficient condition for convergence of \algref{algo:blahut}. This condition is trivially satisfied if the reproduction alphabet is finite and $P_{Y_0}$ is supported everywhere. 
 
An alternative convergence guarantee can be obtained as follows. Considering \eqref{eq:dp} and noting that
\begin{equation}
 D(Y^* \| Y_{k-1}) - D(Y^*\| Y_k) \leq \sup_{y \in \mathcal Y}\, \log \frac{dP_{Y_{k}}}{dP_{Y_{k-1}}}(y), \label{eq:stop}
\end{equation}
 we can employ the following stopping criterion for the Blahut algorithm to guarantee estimation accuracy $\delta$:
 if  $\sup_{y \in \mathcal Y}\, \log \frac{dP_{Y_{k}}}{dP_{Y_{k-1}}}(y) \leq \delta$, then stop and output $\tilde F(\lambda) = F_k(\lambda)$. If the same stopping rule is applied for all $\lambda \geq 0$, using \eqref{eq:lagr1}, we find that
the corresponding estimate of the rate-distortion function $\tilde R(d)$ satisfies the same accuracy guarantee:
\begin{equation}
R(d) \leq  \tilde R(d) \leq R(d) + \delta.
\end{equation}

 \subsection{Computation of the rate-distortion function for successive refinement}
 \label{sec:blahutsucc}

 A generalization of discrete Blahut's algorithm to successive refinement is proposed in \cite{tuncel2003computation}. \algref{algo:blahutsucc} presents a generalization of the algorithm to abstract alphabets, and \thmref{thm:blahutsuccconver} presents its convergence analysis.

 \begin{algorithm}
\SetKwInOut{Input}{input}\SetKwInOut{Output}{output}
\Input{ $(\nu_1, \lambda_1, \lambda_2) > 0$; maximum number of iterations $K$.}
\Output{ An estimate $F_K(\nu_1, \lambda_1, \lambda_2)$ of $F(\nu_1, \lambda_1, \lambda_2)$, the Lagrange dual of $R_2(d_1, d_2, R_1)$, defined in \eqref{eq:Lstar}.
}
Fix $P^0_{Y_{1}}$ and $P^0_{Y_2 | Y_1}$\;
\For{$ k = 1, 2, \ldots, K$}{
Compute 
\begin{align}
\label{eq:g2starc}
 \two{&}{}\beta_2^{k-1}(x| y_1) 
 \two{\\=\,& \notag}{&=} 
 \mathbb E_{P^{k-1}_{Y_2 | Y_1 = y_1}} \left[ \exp(- \lambda_2 \sd_2(x, Y_2)) | Y_1 = y_1 \right],   \\
  \label{eq:g1starc}
\two{&}{} \beta_1^{k-1}(x) 
 \two{\\=\,&\notag} {&=}
\mathbb E_{P_{Y_1}^{k-1}} \left[ \beta_2^{k-1}(x| Y_1 )^{\frac{1}{1+\nu_1}}\exp \left(-\frac{\lambda_1}{1+\nu_1} \sd_1(x, Y_1)  \right)\right];
\end{align}
Using
\begin{align}
 \frac{d P^k_{ Y_1 | X = x}(y_1)}{dP^{k-1}_{Y_1}(y_1)} \two{}{&}=  \frac{\exp \left(-\frac{\lambda_1}{1+\nu_1} \sd_1(x, y_1) \right)}{\beta_1^{k-1}(x)\beta_2^{k-1}(x| y_1)^{-\frac{1}{1+\nu_1}}} , \two{&}{}   \label{eq:Y1k}\\
 \frac{d P^k_{ Y_2 | X = x,Y_1 = y_1 }(y_2)}{dP^{k-1}_{Y_2 | Y_1 = y_1}(y_2)} \two{}{&}=  \frac {\exp (-\lambda_2 \sd_2(x, y_2))} {\beta_2^{k-1}(x| y_1)},   \two{&}{} \label{eq:Y2k}
\end{align}
compute the transition probability kernels $P^k_{Y_{1} | X}$ and $P_{Y_2 | X Y_1}^k$ that achieve the minimum in 
\begin{align}
\two{&~}{} F_k(\nu_1, \lambda_1, \lambda_2)  \two{\\ \notag}{}
&=  \min_{ P_{Y_1|X},\, P_{Y_2 | XY_1} }  \bar{L} \left( P_{Y_1|X}, P_{Y_2 | XY_1}, P_{Y_1}^{k-1}, P_{Y_2 | Y_1}^{k-1} \right) \\
&= (1+\nu_1)\E{\log \frac{1}{\beta_1^{k-1}(X)}};
\end{align}
where $\bar{L}$ is defined in \eqref{eq:Lbar}, and the minimum is computed in \eqref{eq:g1bar0}. Compute the corresponding $P_{Y_1}^k$ and $P_{Y_2 | Y_1}^k$\;

}
\caption{The generalized Blahut algorithm for successive refinement.}\label{algo:blahutsucc}
\end{algorithm}

\begin{thm}
Suppose $(P_{Y_1}^*, P_{Y_2^* | Y_1^*})$ attain the minimum in \eqref{eq:Lstar}, and let $P_{Y_1}^0$ and $P_{Y_2 | Y_1}^{0}$ be such that $D(P_{Y_1^*}\| P_{Y_1}^0)<\infty$ and $D(P_{Y_2^* | Y_1^*}\| P_{Y_2 | Y_1}^{0} |P_{Y_1^*}) < \infty$. Consider \algref{algo:blahutsucc}. The sequence $F_k(\nu_1, \lambda_1, \lambda_2)$ is monotonically decreasing to $F(\nu_1, \lambda_1, \lambda_2)$, and the convergence speed is bounded as
 \begin{align}
 \label{eq:convspeedsucc}
 \two{&~}{} F_k(\nu_1, \lambda_1, \lambda_2) - F(\nu_1, \lambda_1, \lambda_2) \two{\\}{}
 \leq&~ \frac 1 k \left( (1 + \nu_1) D(P_{Y_1^*}\| P_{Y_1}^0)  + D(P_{Y_2^* | Y_1^*}\| P_{Y_2 | Y_1}^{0} |P_{Y_1^*}) \right). \two{\notag}{}
\end{align}
\label{thm:blahutsuccconver}
\end{thm}

\begin{proof}
We build upon the ideas in the proof of \thmref{thm:blahutconver}. 
From the definition of $\bar L$ and $P_{Y_1|X}^k$, $P_{Y_2|XY_1}^k$, we have
\begin{align}
\label{eq:Lkksucc}
F_{k-1}
&=\bar L( P_{Y_1|X}^k, P_{Y_2 | XY_1}^k, P_{Y_1}^{k-1}, P_{Y_2 | Y_1}^{k-1}) 
\two{\\&\phantom{=}\notag}{}+ D(P_{Y_2 | Y_1}^k \| P_{Y_2 | Y_1}^{k-1} \| P_{Y_1}^k ) + (1 + \nu_1) D(P_{Y_1}^k \| P_{Y_1}^{k-1}) \\
&\geq F_k 
\two{\\&\phantom{=}\notag}{}
+ D(P_{Y_2 | Y_1}^k \| P_{Y_2 | Y_1}^{k-1} \| P_{Y_1}^k ) + (1 + \nu_1) D(P_{Y_1}^k \| P_{Y_1}^{k-1}),
\end{align}
where we suppressed the dependence of $F_k$ on $(\nu_1, \lambda_1, \lambda_2)$ for brevity, i.e. $F_k = F_k (\nu_1, \lambda_1, \lambda_2)$.  It follows that 
\begin{align}
 F_{k} \leq F_{k-1}, \label{eq:decreasingsucc}
\end{align}
 with equality if and only if $P_{Y_2 | Y_1}^{k-1} P_{Y_1}^{k-1} = P_{Y_2 | Y_1}^{k} P_{Y_1}^k$, which implies that $F_{k-1} = F_{k} = F$.   
 
 Taking expectations of the logarithms of \eqref{eq:Y1k} and \eqref{eq:Y2k}  with respect to $P_{X Y_1^* Y_2^*}$ and using \eqref{eq:g1bar0}, we deduce that
 \begin{align}
\two{&~}{} F_k \two{\notag\\}{}
=&~ F + (1+\nu_1) \left( D(P_{Y_1^*} \| P_{Y_1}^{k-1}) - D(P_{Y_1^*|X} \| P_{Y_1^{k} | X} \| P_X) \right) \notag\\
&~+ D(P_{Y_2^* | Y_1^*}\| P_{Y_2 | Y_1}^{k-1} |P_{Y_1^*})
\two{\notag\\&~}{}
 -D(P_{Y_2^* | X_1 Y_1^*}\| P_{Y_2 | X_1 Y_1}^{k} | P_{X_1 Y_1^*})  \\
\leq&~ F + (1+\nu_1) \left( D(P_{Y_1^*} \| P_{Y_1}^{k-1}) - D(P_{Y_1^*} \| P_{Y_1}^{k} ) \right)  \label{eq:dpsucc}\\
&~+ D(P_{Y_2^* | Y_1^*}\| P_{Y_2 | Y_1}^{k-1} |P_{Y_1^*}) -D(P_{Y_2^* | Y_1^*}\| P_{Y_2 | Y_1}^{k} | P_{Y_1^*}) \notag
\end{align}
where \eqref{eq:dpsucc} holds by the data processing inequality for relative entropy.  

 To show \eqref{eq:convspeedsucc}, we apply \eqref{eq:decreasingsucc} and \eqref{eq:dpsucc}  as follows. 
\begin{align}
\two{&~}{}
K F_K - K F 
\two{\notag\\}{}
\leq&~  \sum_{k=1}^K F_{k} - K F\\
\leq&~  \sum_{k = 1}^K \Big[ (1 + \nu_1) \left(D(P_{Y_1^*} \| P_{Y_1}^{k-1}) - D(P_{Y_1^*|X} \| P_{Y_1}^{k} )\right) \\
&~+  D(P_{Y_2^* | Y_1^*}\| P_{Y_2 | Y_1}^{k-1} |P_{Y_1^*}) -D(P_{Y_2^* | Y_1^*}\| P_{Y_2 | Y_1}^{k} | P_{Y_1^*})  \Big] \notag\\
=&~ (1 + \nu_1)(D(P_{Y_1^*}\| P_{Y_1}^0) - D(P_{Y_1^*}\| P_{Y_1}^K)) \\
&~+ D(P_{Y_2^* | Y_1^*}\| P_{Y_2 | Y_1}^{0} |P_{Y_1^*}) - D(P_{Y_2^* | Y_1^*}\| P_{Y_2 | Y_1}^{K} | P_{Y_1^*}) \notag
\end{align}
\end{proof}

Using \eqref{eq:dpsucc}, we can obtain the following analog of the stopping criterion in \eqref{eq:stop}: to achieve accuracy $F_k(\nu_1, \lambda_1, \lambda_2) - F(\nu_1, \lambda_1, \lambda_2) \leq \delta$, stop as soon as 
\begin{equation}
 \sup_{(y_1, y_2) \in \mathcal Y_1 \times \mathcal Y_2}\, (1 + \nu_1)\log \frac{dP_{Y_{1}}^k}{dP_{Y_1}^{k-1}}(y) + \frac{dP_{Y_2 | Y_1}^k}{dP_{Y_2 | Y_1}^{k-1}}(y) \leq \delta. \label{eq:stopsuc}
\end{equation}
For finite alphabet sources, a counterpart of \eqref{eq:stopsuc} was proposed by Tuncel and Rose \cite{tuncel2003computation}.

\subsection{Numerical example}
Consider successive refinement of $X \sim \mathcal N(0,1)$ under squared error distortion. As is well known, Gaussian source under squared distortion is successively refinable \cite{equitz1991successive}, so at any $0 < d_2 \leq d_1 \leq 1$ and $R(d_1) \leq R_1$, $R_2(d_1, d_2, R_1) = R(d_2) = \frac 1 2 \log \frac 1 {d_2}$.  

In this experiment, we ran \algref{algo:blahutsucc} to verify that it computes an estimate of $R_2(d_1, d_2, R_1)$ that closely matches $R(d_2)$. 

We fixed $\lambda_1 = 5/9$, which corresponds to $d_1 = 0.9$. We also fixed $\nu_1 = 1$ (for this example, the choice of $\nu_1> 0$ is immaterial and can be chosen arbitrarily, as per discussion after \eqref{eq:l2sr}). We set starting densities $P_{Y_1}^0$ and $P_{Y_2|Y_1 = y_1}^0$ to be $\mathcal N(0, 1)$ and $\mathcal N(y_1, 1)$, respectively, ensuring that all the densities in \algref{algo:blahutsucc} are Gaussian, and all the integrals can be computed in closed form. We chose $31$ exponentially spaced slope samples $\lambda_2 > 0$,  and we ran the algorithm for the maximum of $K = 20$ iterations at each choice of $\lambda_2$. In \figref{fig:reg}, 31 straight lines of slopes $-\lambda_2$ correspond to $F_K - \lambda_2 d_2 - \lambda_1 d_1 - \nu_1 R_1$. Their upper convex envelope is the numerical estimate of $R_2(d_1, d_2, R_1)$ according to the algorithm. In \figref{fig:regg}, it is undistinguishable from the the thick curve, which represents the theoretical minimum total rate, $\frac 1 2 \log \frac 1 {d_2}$.

 Computing the expectations in Algorithms \ref{algo:blahut} and \ref{algo:blahutsucc} is easy to do if the output alphabets are finite, even if $P_X$ is continuous, a case also not previously addressed in literature. For infinite output alphabets, computing these expectations can be a computational bottleneck. Still, one could use Algorithms \ref{algo:blahut} and \ref{algo:blahutsucc} to look for the best approximation within a certain family of distributions parametrized by a finite number of parameters. The quality of the approximation will depend on how appropriately the parametric family is chosen. To choose a good family, one could look for a separate theoretical argument that would ensure that the infimum is attained within some class of distributions. Theorems \ref{thm:blahutconver} and \ref{thm:blahutsuccconver} would then ensure convergence when running the algorithm within that class.

\begin{apxonly}
Computations in the Gaussian case. 
\begin{align}
M(a, t) &\triangleq e^{-at/(1+2t)} (1 + 2 t)^{- \frac 1 2};\\
t_2 &=  \sigma_{2|1}^2 \lambda_2\\
\beta_2(x | y_1) &= M((x- y_1)^2/\sigma_{2|1}^2, t_2)\\
&= e^{-(x- y_1)^2 \frac {t_2}{\sigma_{2|1}^2 (1+2t_2)}} (1 + 2 t_2)^{- \frac 1 2}\\
t_1 &= \frac {\sigma_1^2} {1 + \nu_1} \left( \frac{t_2 }{\sigma_{2|1}^2(1+2t_2)} + \lambda_1 \right)\\
\beta_1(x) &=  (1 + 2 t_2)^{- \frac 1 {2(1 + \nu_1)}} M(x^2 / \sigma_1^2, t_1) \\
&= (1 + 2 t_2)^{- \frac 1 {2(1 + \nu_1)}} (1 + 2t_1)^{-\frac 1 2} e^{-\frac {x^2 t_1}{\sigma_1^2(1 + 2t_1)}}\\
Y_1 | X &= x \sim \mathcal N\left( \frac{2t_1}{1 + 2t_1} x, \frac{\sigma_1^2}{1 + 2 t_1}\right)\\
\sigma_{1, new}^2 &= \left(\frac{2t_1}{1 + 2t_1}\right)^2 + \frac{\sigma_1^2}{1 + 2 t_1} \\
X |Y_1  &\sim \mathcal N\left( \frac{2t_1}{(1 + 2t_1)\sigma_{1, new}^2} y_1, 1 - \left(\frac{2t_1}{(1 + 2 t_1)\sigma_{1, new}}\right)^2\right)\\
Y_2 | Y_1, X &\sim \mathcal N\left( \frac{2t_2 x + y_1}{1 + 2t_2}, \frac{\sigma_{2|1}^2}{1 + 2 t_2}\right)\\
Y_2 | Y_1 &\sim \mathcal N\left( \frac{y_1}{1 + 2t_2} + \frac{2t_2}{1 + 2t_2} \E{X|Y_1 = y_1}, \left(\frac{2t_2}{1 + 2t_2} \right)^2 \sigma_{X|Y_1}^2 + \frac{\sigma_{2|1}^2}{1 + 2 t_2} \right)
\end{align}
\end{apxonly}

\begin{figure}[htp]
\centering
\epsfig{file=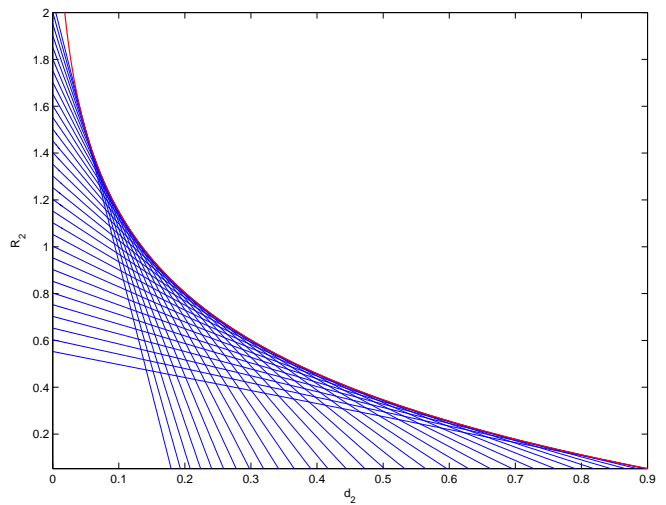,width=\two{1}{.5}\linewidth}
 \caption[]{The minimum total rate at stage $2$ for Gaussian successive refinement, for fixed $\lambda_1 = 5/9$ (corresponding to $d_1 = 0.9$, $R_1 = - .5 \log .9 \approx .05 $). } \label{fig:regg}
\end{figure}

\section{Conclusion}
In this paper, we revisited the parametric representation of rate-distortion function of abstract sources (\thmref{thm:csiszarg}, proof in \secref{sec:csiszar}). We showed its generalization to the successive refinement problem (\thmref{thm:parametric}, proof in \secref{sec:proof}). That representation leads to a tight nonasymptotic converse bound for successive refinement, presented in \secref{sec:nonasymptotic}. It also helps to formulate and prove the convergence of an iterative algorithm that can be applied to compute the rate-distortion function on abstract alphabets, presented in \secref{sec:algo}. 

It will be interesting to see whether the approach presented in this paper can be applied to study rate-distortion regions of other important multiterminal information theory problems, such as lossy compression with side information available at decoder (the Wyner-Ziv problem \cite{wynerziv1976rate}), the multiple descriptions problem \cite{wolf1980source} and lossy compression with possibly absent side information (the Kaspi problem \cite{kaspi1994rate}).
It also paves the way to  a refined nonasymptotic analysis of successive refinement for abstract sources.

\section{Acknowledgement}
We would like to thank Lin Zhou for valuable comments regarding a second-order analysis of the nonasymptotic bounds in \secref{sec:nonasymptotic}, and both anonymous reviewers for detailed suggestions. 

\bibliographystyle{IEEEtran}
\bibliography{../../Bibliography/rateDistortion,../../Bibliography/vk}

\end{document}